\newcommand{\dk}[1]{#1}
\newcommand{\cb}[1]{#1}
\newcommand{\cbr}[1]{#1}
\newcommand{\acpd}{{\sf Asymmetric-C-Partial-D}}
\newcommand{\scfd}{{\sf Symmetric-C-Full-D}}
\newcommand{\combined}{{\sf Combined-C-D}}
\newcommand{\e}{{\epsilon}}
\newcommand{\remove}[1]{}
\newtheorem{theorem}{Theorem}[section]
\newtheorem{lemma}{Lemma}
\begin{document}

\title{Byzantine-Resilient Population Protocols
}

\author{Costas Busch}
\author{Dariusz R. Kowalski}
\affil{
\textit{School of Computer \& Cyber Sciences,
Augusta University}\\
Augusta, Georgia, USA \\
\{kbusch,dkowalski\}@augusta.edu
}





\date{}
\maketitle

\begin{abstract}
Population protocols model 
information spreading and computation in 
network systems
where pairwise node exchanges are determined by an external random scheduler.
Moreover, nodes are often assumed to have small memory and computational resources.
Most of the population protocols in the literature assume 
that the 
participating
$n$ nodes are honest. 
Such an assumption may not be, however, accurate for large-scale systems of small devices. Hence,
in this work,
we study 
population protocols
in a setting
where up to $f$ 
nodes can be Byzantine.
We examine the majority (binary) consensus problem
against
different levels of adversary strengths,
ranging from the Full Byzantine adversary that has complete knowledge of all the node states
to the Weak Content-Oblivious Byzantine adversary that has only knowledge about which exchanges take place.
We also take into account 
Dynamic vs Static node corruption by the adversary.
We give lower bounds that require any algorithm 
solving
the majority consensus 
to have initial difference $d = \Omega(f + 1)$ for the tally between the two proposed values,
which holds
for both the Full Static and Weak Dynamic adversaries.
We then present an algorithm that solves the majority consensus problem 
and
tolerates $f \leq n / c$ Byzantine nodes, for some constant $c>0$, with
$d = \Omega(f + \sqrt{n \log n})$ and $O(\log^3 n)$ parallel time steps,
using $O(\log^3 n)$ states per node.
We also give an alternative algorithm, with the same asymptotic performance, for 
$d = \Omega(\min\{f \log^2 n + 1,n\})$.
Finally,
we combine both algorithms into one using a new robust distributed common coin.
The only other known previous work on Byzantine-resilient population protocols 
tolerates up to $f = o(\sqrt n)$ faulty nodes and 
was analyzed
against a Static adversary;
hence, our protocols significantly improve 
fault-tolerance by an $\omega(\sqrt n)$~factor and all of them work correctly against a stronger Dynamic adversary.
\end{abstract}

\noindent
{\bf Keywords:}
population protocols, majority-value consensus, Byzantine adversary, randomization, lower bounds.



\section{Introduction}

Consider a system of $n$ nodes such that each node has a finite local state. 
The nodes do not have identifiers, hence, any two nodes in the same state are indistinguishable.
The global state of the system (configuration) is the collection of the individual local states of all the nodes. Two nodes communicate with each other through an {\em exchange} that carries information from one node to the other in both directions between them. 
The exchange may alter the local states of the involved nodes, and hence, 
an exchange may alter the global state as well.
An exchange scheduler specifies a sequence of exchanges in the system.
Population protocols consider the {\em randomized scheduler} 
that 
selects a pair of nodes to perform 
an exchange between them uniformly at random~\cite{AngluinADFP06,berenbrink2018population}.
\cbr{Random exchanges model interactions in natural (i.e. biological or chemical) systems where nodes meet randomly and exchange information.}

In the {\em binary majority (consensus) problem}, each node 
in the  initial configuration has an initial preference value which is either $A$ or $B$.
The task is to perform exchanges to converge to a final configuration where every node 
chooses the same value ($A$ or $B$) that represents the 
majority of the initial preferences.
A performance metric for a majority consensus algorithm is the 
{\em parallel time} to converge to the decision, 
which is the total number of exchanges until convergence divided by $n$.
Alistarh {\em et al.}
\cite{DBLP:conf/podc/AlistarhGV15} give a $\Omega(\log n)$ lower bound on 
the parallel time for the majority problem.
In the literature, several works have contributed to the problem of 
solving the majority consensus problem in poly-logarithmic parallel time steps, and also with small number of states, namely, 
polylogarithmic number of
states per node or less \cite{AlistarhAEGR17, DBLP:conf/podc/AlistarhGV15, berenbrink2018population, DBLP:journals/dc/MertziosNRS17}.
Note that a polylogarithmic number of states can be represented by $O(\log\log n)$ bits, therefore our algorithms use $O(\log\log n)$ bits for both local memory~and~exchange. 

In the traditional setting, population protocols assume that all nodes are honest 
and follow the steps of the protocol in order to accomplish the joint consensus task.
On the other hand, they aim at modelling a behavior of a large-scale system of ``small'' units, which in practise could malfunction and be difficult to ``fix manually''. Therefore,
we consider here an alternative version of the majority problem involving a Byzantine adversary 
that may corrupt up to $f$ nodes, which will refer to as the {\em faulty nodes}.
The Byzantine adversary fully controls the faulty nodes and their states, and may attempt to destabilize the majority algorithm,
by delaying or disabling 
its convergence.
Moreover,
the Byzantine adversary may cause the 
minority initial preference to be decided by the honest~nodes,
diverting the~correct~decision.

More precisely, a Byzantine node can only affect the state of an honest node during an exchange, by forcing certain transitions to the state machine of the honest node with the hope to derail the majority consensus process. However, the honest nodes are still able to operate after the exchanges with the faulty nodes, and with appropriate algorithm they could even compute a majority value correctly. We consider the following types of Byzantine Adversaries (B.A.):
\begin{itemize}
    \item {\em Full (dynamic) B. A.:}
    knows the algorithm, observes nodes and their states,  corrupts up to $f$ nodes in online fashion.
    \item {\em Weak (dynamic) B. A.:}
    knows the algorithm,  observes choices of the scheduler and states resulting from exchanges (but not the future),
    corrupts up to $f$ nodes in online fashion.
    \item {\em Weak (dynamic) Content-Oblivious B. A.:}
    knows the algorithm,  observes only the choices of the scheduler (but not the future), corrupts up to $f$ nodes in online fashion.
    \item {\em Full/Weak Static B. A.:} Similar to the Full/Weak dynamic B. A., but corrupted nodes are selected before the algorithm starts.
\end{itemize}
Note that the Static 
adversary is not more powerful than the Dynamic, 
and the Weak is not more powerful than the Full, 
if the other characteristics 
are the same.
Here by ``not more powerful'' we mean that it cannot create distributions of executions different~than~the~other~one can.

Let $d$ denote the absolute difference between the number 
of nodes that have initial preference $A$ and $B$.
The only other known population protocol 
that is Byzantine-resilient is given by Angluin {\em et al.}
\cite{DBLP:journals/dc/AngluinAE08} which present a 3-state population protocol
that requires $d = \omega({\sqrt n} \log n)$
and converges to the majority value in $O(\log n)$ parallel time.
The protocol tolerates up to $f = o(\sqrt n)$ faulty nodes.
The approach of the protocol is that exchanges involving opposite value nodes 
result in nodes canceling their preferences becoming blank nodes,
and nodes that become blank eventually adopt the preference of other nodes in future exchanges.
Using a potential function analysis they show that the algorithm converges
to a state where all nodes correctly adopt the majority value assuming the initial difference $d$.

\cbr{Inspired by classic distributed consensus algorithms,
where there are known protocols that 
can tolerate up to $f < n/3$ faulty nodes~\cite{DBLP:journals/toplas/LamportSP82,DBLP:journals/jacm/PeaseSL80},
we show that similar asymptotic tolerance can also be achieved in 
population protocols.
Compared to the previous work \cite{DBLP:journals/dc/AngluinAE08},
our protocols tolerate larger number of faulty nodes} 
which is up to a constant fraction of $n$, that is, $f = O(n)$.
Next, the tally difference $d$ handled by our protocols is proportional to the number of failures $f$, for $f=\Omega(\sqrt{n\log n})$.
Moreover, the protocol in~\cite{DBLP:journals/dc/AngluinAE08} 
was analyzed only
against a static adversary, where the faulty nodes 
are chosen before the execution, while ours is for the stronger and more general dynamic adversary,
where the faulty nodes are determined during the algorithm execution.
The trade-off for the higher tolerance is that we use $O(\log^3 n)$ states per node,
and a 
$O(\log^2 n)$
factor slowdown in convergence time.
Table \ref{tab:comparison} has a comparison of the results in \cite{DBLP:journals/dc/AngluinAE08}
with~ours.

Berenbrink et al.~\cite{berenbrink2018population}
give a population protocol with $O(\log n)$ states that converges 
to the majority value in $O(\log^{5/3} n)$ parallel time and only requires $d = 1$.
This result is further improved to $O(\log^{3/2} n)$ parallel time~\cite{10.1145/3382734.3405747}.
However, these works are not Byzantine-resilient like ours.
Nevertheless, we use similar techniques as in~\cite{berenbrink2018population} for the protocol design and analysis 
which we adapt appropriately to develop the fault resiliency.
Note that our protocols also operate when $f=0$ and compared to 
the protocol in \cite{berenbrink2018population} 
the achieved performance is within a poly-log factor 
with respect to time using the same number of states and $d = 1$.
There have also been other works on regular population protocols (non-Byzantine-resilient),
which trade-off between $d$, parallel time, and number of states.
A broader discussion on related work appears in Section~\ref{sec:related}.

\vspace*{1ex}
\noindent
{\bf Motivation:}
Byzantine-resilient distributed computing, and the consensus problem in particular, has been in the center of distributed computing~\cite{Attiya-Welch-book2004} and systems~\cite{tanenbaum2017distributed} (Chapters 8 and 9) since the seminal works by Lamport {\em et al.}~\cite{DBLP:journals/toplas/LamportSP82,DBLP:journals/jacm/PeaseSL80}. It also gathers attention from the security community, c.f.,~\cite{DBLP:conf/stoc/CanettiR93}. Our study, similarly to vast majority of research on Byzantine-resiliency, uses mathematical tools to analyze properties and provide bounds on the performance of the proposed algorithms, as it is computationally too expensive to accurately simulate Byzantine behavior in asynchronous systems. However, our protocols could be easily and efficiently implemented in almost any distributed system, as (1) they use very limited resources and very simple communication mechanism, (2) they work in a presence of asynchrony and Byzantine failures, and (3) they are provably very fast even in worst-case executions.

\begin{table*}[t]
    \centering
    \begin{tabular}{|c|c|c|c|c|}
       \hline
       {\bf Population Protocol}  & {\bf $f$} & {\bf Tally Difference $d$} & {\bf Time} & {\bf States}\\
       \hline
       \cite{berenbrink2018population} & 0 & $\Omega(1)$ & $O(\log^{5/3} n)$ & $O(\log n)$\\
       \hline
       \ \ \! \cite{DBLP:journals/dc/AngluinAE08} $^*$ & $o(\sqrt n)$ & $\omega({\sqrt n} \log n)$ & $O(\log n)$ & $O(1)$\\
       \hline
       \acpd & $O(n) $ &$\Omega(f + \sqrt{n \log n})$ 
       & $O(\log^3 n)$& $O(\log^3 n)$\\
       \hline
       \scfd & $O(n) $ & $\Omega(\min\{f \log^2 n + 1,n\})$ & $O(\log^3 n)$ & $O(\log^3 n)$\\
       \hline
       \combined & $O(n) $ & $\Omega(\min\{f + \sqrt{n \log n}, f \log^2 n + 1, n\})$ & $O(\log^3 n)$ & $O(\log^3 n)$\\
       \hline
    \end{tabular}
    
    \caption{Results and comparison with previous works. $^*$ The protocol in \cite{DBLP:journals/dc/AngluinAE08} is designed for a static adversary, while our protocols are 
    designed and analyzed
    for the stronger and more general dynamic adversary. 
    Note that if $f>0$, the lower bound on the tally difference $d$ is $\Omega(f+1)$, by Lemmas~\ref{lem:lower-full},~\ref{lemma:lower-weak-adaptive},~\ref{lemma:lower-weak-oblivious}.}
    \label{tab:comparison}
\end{table*}

\remove{
\subsection{Byzantine Adversary Node Models}
\label{sec:adversary-models}
For the honest nodes, we adopt the same computation model for population protocols as described 
by \cite{berenbrink2018population}.
That is, each honest node runs a deterministic finite state machine.
The nodes do not have identifiers, hence, any two nodes in the same state are indistinguishable.
A randomized scheduler picks at any time a pair of nodes to perform an exchange.
Each exchange reveals the local state of the nodes to the two nodes involved,
and the local state changes according to the deterministic transition~function.

In addition to the above, we allow exchanges between honest and faulty nodes as well
(\cite{DBLP:journals/dc/AngluinAE08}). This impacts the local states of the involved honest nodes in the exchanges. A Byzantine node can only affect the state of an honest node by forcing certain transitions to the state machine of the honest node with the hope to derail the majority consensus process. However, the honest nodes will still be able to operate after the exchanges with the faulty nodes. With an appropriate state machine, the honest nodes would also still be able to reach majority value consensus.

We consider population protocols with the following Byzantine adversary node models:
\begin{itemize}
    \item Corrupted nodes: the adversary controls fully the states of nodes with status corrupted. In particular, during an exchange of a corrupted node with a non-corrupted one, ???
    A node can become corrupted in the very beginning of the computation or during the computation -- if the latter is allowed, the adversary is called {\em dynamic}.
    \item {\em Full (dynamic) Byzantine adversary:}
    The adversary knows the algorithm, observes nodes and links and corrupts up to $f$ nodes in online fashion (dynamically); once a node is corrupted, the adversary fully controls it in a centralized fashion.
    \item {\em Weak (dynamic) Byzantine adversary:}
    The adversary knows the algorithm,  sees the past and current choices of the scheduler (but not the future), and ???
    It can also decide at any time, based on the observed history, if it corrupts (i.e., takes control over) some nodes.
    \item {\em Weak (dynamic) Content-Oblivious Byzantine adversary:}
    The adversary knows the algorithm,  observes only the scheduler (i.e., which links are active) and corrupts up to $f$ nodes in online fashion (dynamically); once a node is corrupted, the adversary fully controls it in a centralized fashion.
    \item {\em Full/Weak Static Byzantine adversary:} Similar to the Full/Weak dynamic Byzantine adversary, but the selection of corrupted nodes is static and determined before the algorithms starts.
\end{itemize}
Note that the static type of the adversary is not more powerful than the dynamic type, and Weak is not more powerful than the Full one, provided the other characteristics of the compared adversaries are the same.
Here by ``not more powerful'' we mean that it cannot create distributions of executions different that the latter one.
}

\subsection{Contributions}
We present lower bounds and algorithmic upper bounds (main results) for the binary majority consensus problem with respect to the Byzantine adversary models.
We assume the fully connected graph model where the randomized adversarial scheduler picks each time 
a random pair of nodes to perform an exchange.
A comparison of our results with the previous work can be found in Table \ref{tab:comparison}.

\vspace*{1ex}
\noindent
{\bf Lower Bounds:}
Let $a$ and $b$ be the tally of the nodes having initial value $A$ and $B$, respectively. Let $d = |a - b|$.
We show that any algorithm for the Full Byzantine adversary, even Static,
requires 
$d\ge 2f$
in order for the honest nodes to make the right decision on the majority value. This holds for both deterministic and randomized 
algorithms, c.f., Lemma~\ref{lem:lower-full}.

We also show a similar lower bound for the Weak (Dynamic) Byzantine adversary, where we show that any algorithm requires $d\ge 2(f-1)$ for a vast majority of values~$f$, c.f., Lemma~\ref{lemma:lower-weak-adaptive}.
Slightly weaker criteria of $d\ge f-1$ can be enforced by the Weak Content-Oblivious adversary, see Lemma~\ref{lemma:lower-weak-oblivious}.
    
\vspace*{1ex}
\noindent
{\bf Upper Bounds:}
We present three algorithms that are resilient to the strongest considered {\em Full Dynamic Byzantine Adversary} and tolerate $f \leq n/c_f$, for some constant $c_f$. The first two algorithms ({\acpd} and \scfd) have {\em deterministic transition functions} for the local node states and make different assumptions about $d$. The third algorithm (\combined) combines the two algorithms into a single algorithm with a unified assumption~for~$d$. 
\begin{itemize}

\item Algorithm \acpd:
This algorithm works for $d = \Omega(f + \sqrt{n \log n})$.
It operates with asymptotically optimal difference $d$ when $f = \Omega(\sqrt{n \log n})$.
The asymptotic performance of the algorithm is stated in the following theorem.

\end{itemize}

\begin{theorem}
\label{theorem:algo1}
For sufficiently large $n$, 
and $f \leq n/c_f$, for some constant $c_f$,
if $d = \Omega(f + \sqrt{n \log n})$
then with Algorithm \acpd\ all honest nodes decide the majority value
in $O(\log^3 n)$ parallel time steps using $O(\log^3 n)$ states per node
with probability at least $1 - O(\log n) / n$.  
\end{theorem}

\begin{itemize}
    
\item Algorithm \scfd: 
This algorithm works for $d = \Omega(\min\{f \log^2 n + 1,n\})$. 
The algorithm has the benefit that it works also for the case when $d = o(\sqrt{n \log n})$,
with the trade-off that in the expression of $f$ the term $d$ is multiplied by a poly-log factor of $n$.
The asymptotic performance of the algorithm is stated in the following theorem.

\end{itemize}

\begin{theorem}
\label{theorem:algo2}
For sufficiently large $n$, 
and $f \leq n / c_f$, for some constant $c_f$,
if $d = \Omega(\min\{f \log^2 n + 1,n\})$
then with Algorithm \scfd\ all honest nodes decide the majority value
in $O(\log^3 n)$ parallel time steps using $O(\log^3 n)$ states per node 
with probability at least $1 - O(\log^3 n) / n$.  
\end{theorem}
    
\begin{itemize}
    
\item Algorithm \combined:
This algorithm combines the bounds of the two Algorithms \acpd\ and \scfd\
to give a unified $d$ in terms of $f$.
But since $f$ is not known, the algorithm has to adjust to one of the two algorithms that offers better performance for this $f$ -- it is done by implementing and using a random bias. 
Therefore, the transition function of the nodes is {\em randomized}.

\end{itemize}

\begin{theorem}
\label{theorem:combined}
For sufficiently large $n$, 
and $f \leq n / c_f$, for some constant $c_f$, 
if $d = \Omega(\min\{f + \sqrt{n \log n}, f \log^2 n + 1, n\})$
then with Algorithm \combined\ all honest nodes decide the majority value
in $O(\log^3 n)$ parallel time steps using $O(\log^3 n)$ states per node with probability 
at least $1 - O(\log^3 n) / n$.
\end{theorem}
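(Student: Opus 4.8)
The plan is to derive Theorem~\ref{theorem:combined} by composing Algorithms \acpd\ and \scfd\ and letting a random bias in the transition function decide how the composition is steered, since the true number of faults $f$ is unknown. The starting point is the observation that the combined threshold $d = \Omega(\min\{f + \sqrt{n\log n},\, f\log^2 n + 1,\, n\})$ is precisely the union of the feasibility regions of the two sub-algorithms: I would first prove the elementary fact that there is a threshold $f_0 = \Theta(\sqrt{n\log n}/\log^2 n)$ such that for $f \ge f_0$ the combined bound implies $d = \Omega(f + \sqrt{n\log n})$ (the precondition of \acpd, Theorem~\ref{theorem:algo1}), while for $f \le f_0$ it implies $d = \Omega(\min\{f\log^2 n + 1, n\})$ (the precondition of \scfd, Theorem~\ref{theorem:algo2}). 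Thus, whatever $f$ is, at least one of the two sub-algorithms is ``responsible'' and, under the combined hypothesis, its own hypothesis holds.

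Next I would specify \combined\ as a time-shared composition: using the random bias, the population runs one of the two sub-protocols for its $\Theta(\log^3 n)$-step budget, then executes \termcheck; if convergence is already certified the algorithm halts, otherwise it runs the other sub-protocol for its $\Theta(\log^3 n)$ steps and executes \termcheck\ again. The random bias is what selects which sub-protocol goes first (a population-synthesized fair coin, or equivalently a randomized tie-break in the nodes' rules), so that the ordering is not fixed in advance. Because only a constant number of $\Theta(\log^3 n)$-length stages run in sequence, the total parallel time stays $O(\log^3 n)$ and, since the sub-protocols are interleaved in time rather than executed concurrently, the state count stays $O(\log n)$.

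Correctness rests on three claims that I would establish in order. (i) If the responsible sub-algorithm is executed from a configuration in which the honest tally difference is still $\Omega(d)$ with the original sign, then by Theorem~\ref{theorem:algo1} or~\ref{theorem:algo2} all honest nodes reach the majority value within the allotted $\Theta(\log^3 n)$ steps, except with probability $O(\log^3 n)/n$. (ii) Running the non-responsible sub-algorithm (whose precondition may fail for this $f$) for $\Theta(\log^3 n)$ steps neither flips the sign of the honest majority nor reduces its magnitude below $\Omega(d)$; hence, no matter which order the random bias picks, by the time control reaches the responsible sub-algorithm it still sees a legitimate input. (iii) \termcheck\ is sound and complete against the Full Dynamic adversary, i.e.\ it certifies convergence exactly when all honest nodes hold the majority value, so the early exit after the first stage is never wrong and never premature; this costs a further $O(\log^3 n)/n$ in failure probability. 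A union bound over the constantly many stages then gives the stated $1 - O(\log^3 n)/n$ bound.

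The step I expect to be the main obstacle is (ii): showing that the ``out-of-regime'' sub-algorithm cannot corrupt the configuration for the other. Concretely, \scfd\ run when $f \gg f_0$ may simply fail to converge but should be unable to overturn an $\Omega(f)$ honest majority, and \acpd\ run when $d$ is as small as $\Theta(f\log^2 n)$ should be unable to erase that margin; proving this seems to require revisiting the potential-function analyses behind Theorems~\ref{theorem:algo1} and~\ref{theorem:algo2} to extract a sign-and-magnitude invariant for the honest tally that holds unconditionally on the pair $(f,d)$, not only when the corresponding precondition is met. If such an unconditional invariant is unavailable, the fallback is to have \combined\ carry the two sub-protocols on disjoint portions of each node's state (still $O(\log n)$ states), so neither can overwrite the other's working memory, at the price of a \termcheck\ that must reconcile the two copies before deciding; I would present the clean time-shared composition when the invariant holds and this split-state variant otherwise.
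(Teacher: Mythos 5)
Your dichotomy at the start is fine (for $f \gtrsim \sqrt{n\log n}/\log^2 n$ the combined bound implies the \acpd\ precondition, otherwise the \scfd\ one), but the mechanism you propose for exploiting it does not work, and it is not the paper's mechanism. The two claims your composition rests on both fail. Claim (iii) asks {\termcheck} to be ``sound and complete,'' i.e.\ to certify convergence exactly when the honest nodes hold \emph{the majority value}; but {\termcheck} only samples the \emph{current} values of nodes, so if the out-of-regime sub-algorithm runs first and drives the honest population to near-unanimity on the \emph{minority} value (which nothing prevents when its precondition on $d$ is violated), {\termcheck} will happily certify that wrong value and your early exit decides incorrectly. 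No local sampling procedure can distinguish ``unanimous on the original majority'' from ``unanimous on the original minority,'' which is exactly the identification problem the paper points out motivates \combined\ in the first place. Claim (ii) fails as well, and not just as a technical obstacle: when $f$ is small the combined hypothesis only guarantees $d=\Omega(f\log^2 n+1)$, which can be $\Theta(\log^2 n)$, while a single cancellation/duplication phase of \acpd\ run out of regime incurs fluctuations of order $\sqrt{n\log n}$ (Lemmas~\ref{lemma:cancellation} and~\ref{lemma:duplication}); these swamp the honest gap and can flip its sign, so after the first stage the configuration no longer carries the information the second sub-algorithm would need. Your fallback (carry both protocols on disjoint state) repairs this last point but then leaves you with two candidate answers and no rule for choosing between them --- which is precisely the content of the theorem, not a detail to be deferred to a ``reconciling'' {\termcheck}. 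Also note your random coin only randomizes the \emph{order} of the stages, which buys nothing against any of the adversaries.

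The paper's proof uses the randomness quite differently, and that is the idea you are missing. \combined\ runs \acpd\ and \scfd\ in parallel (on disjoint parts of the state) three times each, always restarting from the nodes' remembered initial preferences: run 1 on the true input, run 2 on the input randomly biased by about $c\sqrt{n\log n}$ toward $A$ (each $B$-node flips with probability $c\sqrt{\log n/n}$), run 3 biased symmetrically toward $B$. In addition, a short initial sampling phase sets a bit $Z_0$ indicating whether $d \gtrsim n/\log^2 n$. The decision rule outputs \acpd's first answer $X_1$ if $Z_0=1$ or if $X_2=X_3$, and \scfd's first answer $Y_1$ if $Z_0=0$ and $X_2\neq X_3$. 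The point is that the $\pm\Theta(\sqrt{n\log n})$ perturbations act as a statistical test of whether the true gap exceeds the $\sqrt{n\log n}$ scale: if it does, \acpd's output is stable under both perturbations ($X_2=X_3$) and its precondition holds, so $X_1$ is correct; if it does not, the opposite biases force $X_2\neq X_3$, and in that regime the combined hypothesis forces $d=\Omega(f\log^2 n+1)$, so \scfd's precondition holds and $Y_1$ is correct; in the intermediate window both preconditions hold and either output is fine. This perturb-and-compare test is the identification mechanism your proposal lacks, and without it (or some substitute of comparable power) your composition cannot be completed.
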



\noindent    
{\bf Techniques and novelty:}
Our first two protocols are based on dividing the execution steps of the nodes into three distinct phases: cancellation, duplication, and resolution.
Although this high-level approach may look
similar to~\cite{berenbrink2018population},
our implementations of the phases are substantially different, as the previous protocols
were not inherently designed to tolerate Byzantine faults
and
thus
may return the wrong outcome in the presence of faulty nodes.

Our main idea of
the cancellation phase 
is that
the nodes 
participating
in an exchange and 
having
opposite values 
become {\em empty}, holding no value. 
On the other hand, in 
our implementation of the
duplication phase, the non-empty nodes (holding value $A$ or $B$) clone their values on the empty nodes.
This has the effect of amplifying (doubling) the 
difference $d$ between the tallies of the two values.
By repeating this a logarithmic number of times, 
the difference $d$ of the tallies becomes linear eventually.
At 
that
point, the nodes attempt to make a decision in the resolution phase
by deciding the value that is sampled the most.
The duration of each phase is a poly-logarithmic number of parallel steps, 
giving a poly-logarithmic overall time.

With the Byzantine adversary, several things can go wrong during this process.
During the cancellation phase, the Byzantine adversary 
may
attempt to cancel values of the majority,
decreasing the difference of tallies $d$ and possibly converting the minority value to a majority.
During the duplication phase, the Byzantine adversary 
could try
to clone values of the minority,
again having the same negative effect as in the cancellation phase.

To limit such negative behavior,
the first algorithm (\acpd) uses another new concept of {\em Asymmetric Cancellations} in the exchanges,
where in the exchange only one side unilaterally cancels its value while the other side does not.
A cancellation exchange executes only once in a phase per honest node.
This limits significantly the negative impact of the Byzantine nodes to the honest nodes.
Similarly, in the duplication phase, there is a single 
chance for an empty honest node to participate in an exchange that can make it 
adopt a cloned value in the phase.
This may not allow to clone all possible non-empty node values,
resulting 
in
{\em Partial Duplication} (less than doubling factor).
Because of the asymmetric cancellations and partial duplication,
this algorithm requires an initial tally difference $d = \Omega(f + \sqrt{n \log n})$
in order to work properly and give a majority decision.
However, the benefit is that for {\em large} $f = \Omega(\sqrt{n \log n})$
the parameter $d$ is asymptotically optimal.

On the other hand, the second algorithm (\scfd) 
uses somehow complementary approach of {\em Symmetric Cancellations} and 
{\em Full Duplication}.
This is because this algorithm, unlike \acpd, behaves better for a {\em smaller} number of faulty nodes $f$.
The nodes execute multiple cancellation or duplication exchanges 
per phase,
which exposes them for longer periods of time to the negative impacts of the Byzantine nodes.
For this reason, the algorithm requires an asymptotically larger initial difference $d = \Omega(f \log^2 n + 1)$, which inserts a $\log^2 n$ factor to the $f$ term,
but has the benefit 
of avoiding
the additive $\sqrt{n \log n}$~term.

The last algorithm (\combined)
uses local random coins to combine both algorithms into one, which works efficiently for {\em all} values of $f$ even without knowing $f$. 
To correctly guess which of the two algorithms produces the correct output,
it creates a $\Theta(\sqrt{n \log n})$ {\em random bias} towards one of the input values, which we call a random common coin. 
It runs the algorithms three times, once with the original initial values, 
and the next two runs with the random bias toward one of the input values each time.
Depending on the outcomes of the runs, it makes 
a correct prediction of
which algorithm produced the right output.

\subsection*{Outline of Paper}
We present our lower bounds 
for the Full and Weak Byzantine adversaries in Section~\ref{sec:lower-bounds}.
\cbr{In Section \ref{section:skeleton}, we present a skeleton algorithm for determining the majority value
with a basic analysis.}
We present Algorithm \acpd\ and its analysis in Section~\ref{sec:acpd}
while Algorithm \scfd\ appears in Section~\ref{sec:scfd}.
We give the combined algorithm description and analysis in Section~\ref{sec:combined}.
Related work appears in Section \ref{sec:related}, and
conclusions in Section~\ref{sec:conclusion}.

\section{Computation Model}
\label{sec:model}

\subsection{Basic Population Protocol Model}
Our population protocol model is based on the models 
in
\cite{berenbrink2018population}
and \cite{DBLP:journals/dc/AngluinAE08}, but adapted to incorporate appropriately the Byzantine adversary~models. 

We assume a set of {\em nodes} $V$, where $|V| = n$,
which are connected in a {\em complete undirected simple graph}.
The nodes do not have any identifiers distinguishing them from others.
An execution of a population protocol, 
run
autonomously at nodes, proceeds in consecutive steps.
In an {\em execution step}, a pair of nodes $(u,v) \in V \times V$ is chosen uniformly at random from the set of all 
${n \choose 2}$
node pairs to participate in an {\em interaction (exchange)}.
Similarly to \cite{AlistarhAEGR17}, we assume that the chosen pair is undirected, as we do not distinguish between initiator and responder nodes in the pair. 
Due to 
this
random scheduler of exchanges, all executions are random.
\cbr{
The exact mechanism to implement an 
exchange depends on the actual system. 
If the system is a network of nodes then an exchange can 
correspond to sending messages with the nodes' state along network links. 
In a biological system the nodes are cells
where an exchange is a chemical/biological interaction between the cells.
}

Each node has a {\em state} taken from a set of states $S$.
A {\em configuration} is a collection of the states of all the nodes at 
the beginning or the end of any specific step.
Each exchange affects the states of the involved nodes
according to a joint {\em transition function} \dk{(also called an {exchange function})} $\Delta: S \times S \to S \times S$.
Namely, when an exchange pair $(u,v)$ is chosen with current states $s_u, s_v$,
the respective states after the exchange become $s'_u, s'_v$,
where $\Delta(s_u,s_v) = (s'_u, s'_v)$.
Note that, similarly to \cite{AlistarhAEGR17}, 
the transition function $\Delta$ is symmetric, that is, it also holds that $\Delta(s_v, s_u) = (s'_v, s'_u)$, since there is no notion of direction on the chosen interacting pair.
Function $\Delta$ is deterministic in case of deterministic algorithms, and random in case of randomized algorithms.
Nodes that do not take part in the interaction do not change their states at that step.


A population protocol also involves input and output functions 
related to the computation task.
There is a finite set of input symbols $X \subseteq S$, 
a finite set of output symbols $Y\subseteq S$,
and an output function $\Gamma: S \to Y$.
The system is initialized with an input function $\Lambda$ 
which maps every node to an input symbol,
that is, $\Lambda: V \to X$.
In population protocols, for the {\em majority problem} the input and output symbols are $X = Y = \{A,B\}$, i.e.,
the input function $\Lambda$ initializes each node to one of the two symbols, $A$ or $B$.
Function $\Lambda$ is always deterministic, even in case of randomized solutions, since it is a problem-related mapping.

We say that the protocol reaches a {\em stable configuration} at step $t$ if for each node it holds that 
for each step after $t$ the output of the node remains the same as its output immediately after the execution of step $t$.
Note that after reaching a stable configuration in step $t$ a node may reach different states, but all those states result to the same output, i.e., $\Gamma(s_u) = \Gamma(s'_u)$,
for any two states $s_u$ and $s'_u$ reached by node $u$ after step $t$.
For a population protocol solving the majority problem, a stable configuration is {\em correct} if the output 
of all the nodes is the majority value in $X$,
in which case we say that the decision is the output majority value.
A protocol is {\em always correct} if the correct stable configuration is reached with probability 1, 
and
is {\em w.h.p. correct} if the correct stable configuration is reached with probability polynomially close to 1 (i.e., $1-n^{-c}$ for some constant $c> 0$).
For protocols which are w.h.p. correct there is a chance that
a stable state is not reached, or that an incorrect stable state may~be~reached.

The {\em stabilization time} of a population protocol is the number of steps until a correct stable 
configuration
is reached. A {\em parallel time step} corresponds to $n$ 
consecutive execution steps.
The {\em time complexity} is, by convention, the stabilization time measured in parallel steps.

\subsection{Byzantine Adversary Models}

\noindent
{\bf Adversary and faulty nodes.}
The {\em Byzantine adversary} corrupts a set of nodes $F \subseteq V$, where $|F| = f$, which become {\em faulty} and from that step, their states are fully controlled by the adversary. The remaining nodes $V \setminus F$ remain {\em correct (honest)}.
The adversary determines statically (i.e., before the execution starts, \dk{but after seeing the nodes' inputs}) or dynamically which nodes to corrupt by monitoring some properties of the execution; the adversary with the latter capability is called {\em Dynamic}, otherwise is called {\em Static}.

\noindent
{\bf Observation of the execution by the adversary.}
The adversary knows the algorithm from the very beginning of the execution, \dk{as well as the inputs of the nodes.} 
\cbr{Because the scheduler is randomized,
the adversary
can only} observe the consequences of random choices to state changes allowed by its power.
The adversary that can observe all current states of nodes is called {\em Full}, but if it could observe only the 
results 
of current interactions, \cbr{i.e., the state of a node after an exchange}
 -- it is called {\em Weak}.
Additionally, if the weak adversary can only observe which nodes participate in a current interaction, it is called {\em Weak Content-Oblivious}.
\dk{In particular, because of the randomness involved in the execution, such an adversary may not be able to predict current states of honest nodes, especially if both interacting nodes are honest.}
While honest nodes \dk{may not be able to} identify themselves or other nodes,
\dk{in the sense that the considered size of local memory is too small to store unique ids,}
we assume that the adversary and the faulty nodes have the ability 
to identify themselves~and~other~nodes.

\dk{
\noindent
{\bf Corruption of nodes.}
Static adversary has to decide which nodes are faulty in the beginning of the execution, knowing only the algorithm and nodes' inputs. Dynamic adversary, on the other hand, could {\em corrupt} a node (i.e., decide that the node becomes faulty) at any time, based on the initial knowledge of the algorithm, nodes' inputs and on the history of observations made so far (see the previous paragraph).
}

\noindent
{\bf Adversarial actions through states of faulty nodes.}
\dk{The only way for faulty nodes to affect states of honest nodes is via exchanges.}
In an exchange, any of the two chosen nodes may be faulty. 
A faulty node may simulate any honest node's state -- 
thus, although the outcome of transition function $\Delta$
is meaningless for a faulty node taking part in the exchange, 
it
could affect the resulting state of the other honest node.
\cbr{Moreover,
a faulty node has the ability to set its state based on the state of the honest node.} 
Hence,
with faulty nodes the protocol may not reach a stable correct~configuration.
\dk{Note that the exchange function is trustworthy -- a faulty node cannot arbitrarily change the state of a paired honest node, only by altering its part of the input to the function which (among others) governs the change of the state of the honest node. This is a common assumption in Byzantine-tolerant literature -- otherwise, if the faulty nodes could do such arbitrary alteration, it would mean full control over honest nodes, and thus they could not be called honest any more.}

\noindent
{\bf Basic model refined by the presence of the adversary.}
Given the faulty nodes, a configuration is {\em stable} if the output of each {\em honest} node remains the same thereafter. In a correct stable configuration, all honest nodes output the majority value. We do not care about the output of the faulty nodes in stable configurations. The notions of always correct, w.h.p. correct, and stabilization time, remain the same as in the basic model.
\section{Lower Bounds}
\label{sec:lower-bounds}

Suppose that $a$ and $b$ are the number of nodes with initial preference $A$ and $B$, respectively.
We prove impossibility results if the initial difference $d=|a-b|$ is too small. We start from the strongest of the considered Byzantine adversaries, and then we modify the proof to work, in slightly weaker forms, against the weaker (dynamic) adversaries.

\begin{lemma}
\label{lem:lower-full}
For any $f \geq 1$, if $|a - b| < 2f$, for any majority algorithm (deterministic or randomized) there is an execution scenario such that the {\bf Full Static Byzantine adversary} causes the minority value to be decided with probability at least $p$, where $p$ is the lower bound on the probability that the algorithm decides correctly (on majority value) if there~are~no~failures.
\end{lemma}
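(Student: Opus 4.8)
The plan is to use an indistinguishability (simulation) argument. Consider the honest-only instance that is hardest for the algorithm. Suppose without loss of generality that $a \ge b$, so $A$ is the majority, and $a - b < 2f$. The idea is to construct two scenarios that the honest nodes cannot tell apart but which require opposite decisions, and in which the adversary uses its $f$ faulty nodes to "flip" the apparent majority.

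\medskip

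\textbf{Construction.} First I would split the $A$-nodes into two groups: a set $A_0$ of $f$ nodes and the remaining $a - f$ nodes. Scenario~1 is the genuine execution: all $n$ nodes are honest with tally $(a,b)$, $A$ is the majority, and by assumption the algorithm decides $A$ with probability at least $p$. Scenario~2 is the adversarial execution on the \emph{same} set of initial states: the adversary statically corrupts exactly the $f$ nodes of $A_0$. The corrupted nodes are instructed to behave, in every exchange, exactly as an honest node holding value $B$ would behave, running the honest transition function on a simulated internal state initialized to $B$. Because the transition function is the same and the faulty nodes perfectly simulate honest $B$-behavior, the joint distribution of the scheduler choices together with the state-transcript seen by the $n-f$ genuinely honest nodes in Scenario~2 is \emph{identical} to that in a third, purely honest scenario (Scenario~3) in which those same $f$ nodes actually start with value $B$ instead of $A$. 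In Scenario~3 the tally is $(a-f,\; b+f)$, and since $a - b < 2f \iff a - f < b + f$, the majority in Scenario~3 is $B$ (or a tie, which one can avoid by a slightly sharper split, or handle by noting a correct algorithm must still not decide $A$).

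\medskip

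\textbf{Deriving the bound.} Now I would argue in two steps. In Scenario~3, correctness (with probability at least $p$, the no-failure guarantee) forces the honest nodes to decide $B$ with probability at least $p$. Since Scenario~2 induces exactly the same distribution over the honest nodes' executions and outputs as Scenario~3 --- the only difference being the bookkeeping label "faulty" attached to the $A_0$ nodes and the irrelevant internal state those nodes also carry --- the honest nodes in Scenario~2 also decide $B$ with probability at least $p$. But Scenario~2 is a legitimate run of the Full Static adversary against an instance whose \emph{true} majority (counting the faulty nodes by their assigned initial preference $A$) is $A$. Hence with probability at least $p$ the honest nodes decide the minority value $B$, which is exactly the claim. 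For the randomized case, "distribution" is over the scheduler's randomness and the algorithm's coin flips jointly; the coupling argument is unchanged since the faulty nodes simulate the honest code including its randomization faithfully.

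\medskip

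\textbf{Main obstacle.} The delicate point is the precise coupling claim: that the adversary corrupting $A_0$ and making it simulate honest $B$ really does produce, for the honest subpopulation, the \emph{exact} same distribution as the all-honest instance with $A_0$ initialized to $B$. This needs the observations that (i) honest nodes have no identifiers, so relabeling which physical nodes are "the $B$-nodes" changes nothing; (ii) the transition function is symmetric and state-based, so an honest node interacting with a faulty node running honest-$B$-code cannot distinguish it from an honest $B$-node; and (iii) a faulty node is permitted to choose its displayed state arbitrarily each step, so "run the honest code on a private $B$-state" is an allowed adversarial strategy --- and crucially it is available to a \emph{static} adversary, since the set $A_0$ and the strategy are fixed in advance. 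I would also need to address the boundary case $a - b = 2f - 1$ vs. an exact tie in Scenario~3; the cleanest fix is to observe the hypothesis is $|a-b| < 2f$, i.e. $|a-b| \le 2f-1$, so $a - f \le b + f - 1 < b+f$ strictly, and Scenario~3 has $B$ strictly in the majority, so no tie arises and a correct algorithm must decide $B$ there with probability $\ge p$.
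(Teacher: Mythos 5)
Your proposal is correct and follows essentially the same route as the paper: statically corrupt $f$ of the majority-$A$ nodes, have them simulate honest $B$-nodes so the execution is indistinguishable from a failure-free instance with tallies $(a-f,\,b+f)$ where $B$ is strictly the majority (since $|a-b|<2f$), and invoke the failure-free correctness guarantee to conclude the minority value is decided with probability at least $p$. Your explicit three-scenario coupling merely spells out the indistinguishability step the paper states in one sentence.
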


\begin{proof}
Without loss of generality assume that $a > b$, hence $A$ is the majority value that will be decided if there are no faulty nodes.
The Byzantine adversary selects $f$ nodes out of the $a$ to make them faulty
(since the adversary is static, 
the selection of faulty nodes has already been made before the first exchange takes place).
The $f$ faulty nodes then behave as if they have initial preference $B$.
Then, the overall tally for $A$ is $a-f$ and the tally for $B$ is $b+f$.
Since $a - b < 2f$, we obtain $a - f < b + f$, and hence $B$ is the majority.
The algorithm cannot distinguish the faulty nodes from the honest ones that have initial preference $B$.
Thus, at the end of the algorithm the nodes decide $B$ overturning the correct outcome $A$.
This construction could be done with probability at least $p$, applied to each execution initially failure-free in which the algorithm decides correctly on majority value.
\end{proof}

\begin{lemma}
\label{lemma:lower-weak-adaptive}
For any $f$ such that $2 \leq f \leq n / 132$, if $|a - b| < 2(f - 1)$ then for any deterministic majority algorithm there is an execution scenario such that the {\bf Weak Byzantine adversary} causes the minority value to be decided with probability~$\Theta(1)$.
\end{lemma}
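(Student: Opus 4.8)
The plan is an indistinguishability (coupling) argument. Assume without loss of generality that $a>b$, so $A$ is the genuine majority, and let $d=a-b<2(f-1)$. Fix a set $P$ of $f-1$ nodes and two input assignments $\Lambda,\Lambda'$ that agree on $V\setminus P$, assign $A$ to every node of $P$ under $\Lambda$, and assign $B$ to every node of $P$ under $\Lambda'$. Then $\Lambda$ has majority $A$, whereas under $\Lambda'$ the tallies are $a-(f-1)$ for $A$ and $b+(f-1)$ for $B$, so $d<2(f-1)$ makes $B$ the majority of $\Lambda'$. The goal is to exhibit a Weak dynamic adversary that, run on the real instance with input $\Lambda$, forces the joint view of the $n-(f-1)$ honest nodes (those outside $P$) to coincide, with probability $\Theta(1)$ over the random scheduler, with their joint view in the \emph{faultless} execution on input $\Lambda'$. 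Since the algorithm is deterministic and solves majority, in the faultless execution on $\Lambda'$ all honest nodes output $B$ w.h.p.; hence under the coupling the honest nodes output $B$ on the real instance with probability $\Theta(1)$, and $B$ is the minority of $\Lambda$, which is what the lemma asserts. (If, pathologically, the algorithm already errs on $\Lambda'$ with constant probability without any faults, take $\Lambda'$ itself as the witness instance, using the elementary fact that two inputs of opposite majority inducing the same honest-view distribution force an error of probability at least $1/2$ on one of them.)

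The adversary corrupts exactly the $f-1$ nodes of $P$ and tries to make each behave, toward every honest partner, exactly like an honest node initialized to $B$: whenever such a node is scheduled it presents the state its simulated honest-$B$ copy currently holds, so the partner's resulting state is the one it would have had against a genuine honest $B$-node. The obstacle --- and the reason this bound is for the Weak adversary and loses a factor, giving $2(f-1)$ rather than the $2f$ of the Full Static case --- is that the Weak adversary never observes node states, only the states \emph{resulting from} interactions. When a corrupted node meets a partner that has interacted before, the adversary has seen that partner's state and can advance the simulation exactly by applying $\Delta$; when two corrupted nodes meet, it advances both simulations jointly. The one delicate case is a corrupted node meeting a still-pristine honest node, whose state is one of the two initial states $A_0,B_0$ and which the adversary has not seen: the adversary presents its simulated $B$-state, then reads off the partner's resulting state and infers whether the partner was $A_0$ or $B_0$. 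This inference succeeds unless $\Delta$ maps $(B_0,A_0)$ and $(B_0,B_0)$ to states that look the same to the simulation --- a degeneracy that works against the algorithm's own ability to distinguish near-tied inputs, and which one either argues away or neutralizes by spending the single unused corruption ($f-1<f$) on that troublesome partner.

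The quantitative part, where the hypotheses $2\le f\le n/132$ and the "$\Theta(1)$ rather than $1-o(1)$" guarantee enter, is to bound the chance the simulation cannot be sustained. Pristine partners exist only during the first $O(\log n)$ parallel steps (afterwards a coupon-collector bound guarantees every node has interacted), and in that window each node of $P$ is scheduled $O(\log n)$ times; with $|P|=f-1\le n/132$, Chernoff and union bounds on the uniform scheduler show that the number of "problematic" coincidences --- an undisambiguable meeting, or two nodes of $P$ scheduled together while both still pristine, or a similar desynchronizing event --- is at most one with at least a fixed constant probability. On that clean event the adversary (using its reserve corruption if needed) keeps the simulation exact for the whole run, so the honest view equals that of the faultless run on $\Lambda'$ and the algorithm outputs $B$ with probability $1-n^{-c}$; multiplying the two constants yields the $\Theta(1)$ probability that the honest nodes decide the minority of $\Lambda$. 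The main obstacle is exactly this core: certifying that a state-blind adversary can maintain the phantom honest-$B$ execution on $P$ against an \emph{arbitrary} deterministic algorithm --- ruling out that the algorithm exploits non-invertibility of $\Delta$ to detect the substitution, and bounding the early-window coincidences tightly enough that one reserve corruption suffices, which is precisely where the slack from $2f$ down to $2(f-1)$ and the constant (rather than high-probability) conclusion come from.
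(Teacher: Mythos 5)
Your plan has a genuine gap at its very first step, and the machinery you build to patch it does not close it. You fix a set $P$ of $f-1$ nodes that are \emph{assigned $A$} under the real input and corrupt exactly those nodes at the start. But the Weak adversary never observes node states -- only scheduler choices and states \emph{resulting from} exchanges -- so it has no way to know which nodes hold $A$ before they have interacted; and conversely, once a node has interacted with some other honest node, corrupting it and replaying it as an honest $B$-node is no longer a clean coupling, because it has already left $A$-traces in the system. This tension is exactly what the paper's construction is designed to resolve: the adversary first guesses the minority value (probability $1/2$), then watches the first $n/8$ exchanges for \emph{first-dual} exchanges (first interaction for both endpoints) whose observed resulting states reveal that both participants held the majority value, and corrupts both endpoints of such a pair immediately -- at that moment no other honest node has ever seen them, so the pair can thereafter simulate two honest $B$-nodes indistinguishably. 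A Markov-inequality count shows that with probability at least $1/1024$ there are more than $n/264 \ge \lfloor f/2\rfloor$ such pairs among the first $n/8$ exchanges (this is where $f \le n/132$ enters), so $2\lfloor f/2\rfloor \ge f-1$ nodes get flipped and $|a-b| < 2(f-1)$ makes $B$ the apparent majority; the $-1$ in $2(f-1)$ comes from the pairing (the floor), not from a reserve corruption. If instead you intend the scenario to hard-wire the input into the adversary (so that $P$ may be chosen among the $A$-nodes), the construction collapses to the Full Static argument of Lemma~1 and all of your simulation-maintenance analysis becomes unnecessary -- the proposal is internally inconsistent about what the adversary is allowed to know.

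Even taking the state-blind reading you adopt in the second half, the quantitative claims do not hold. The number of exchanges in the early window that pair a node of $P$ with a still-pristine honest partner has expectation $\Theta(f)$ (each honest node's first exchange hits $P$ with probability about $f/n$), so for $f$ as large as $n/132$ there are $\Theta(n)$ such meetings, not ``at most one with constant probability''; a single reserve corruption cannot repair them. Moreover, the disambiguation step -- inferring a pristine partner's initial value by inverting $s_v \mapsto \Delta(s,s_v)$ from the observed resulting state -- can fail simultaneously at \emph{all} of these meetings for a transition function that maps $(B_0,A_0)$ and $(B_0,B_0)$ to observationally identical results while still updating the phantom differently; you flag this degeneracy as ``the main obstacle'' and propose to ``argue it away or neutralize it,'' but no argument is given, and one spare corruption cannot neutralize $\Theta(f)$ failures. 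The paper's route avoids both problems by never needing to simulate against unknown states at all: the corrupted pair's only pre-corruption interaction was with each other, so the adversary's phantom $B$-$B$ history is self-contained, and every later partner's state is known from observed exchange results.
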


\begin{proof}
\dk{Consider a scenario in which $|a - b| < 2(f - 1)$. Since $f\ge 2$, we have $|a-b| \le 1 < 2(f-1)$ and hence we may focus on cases when $a\ne b$.}
Without loss of generality \dk{we may also} assume that $a > b$, hence $A$ is the majority value that will be decided if there are no faulty nodes.
Before the first exchange the Byzantine adversary chooses uniformly at random one of the two preference values $A$ or $B$.
Thus, suppose that the adversary picks the minority value $B$, \dk{which takes place} with probability $1/2$.
Then, the adversary will cause $f$ nodes to become faulty so that the chosen value $B$ is decided at the end of the algorithm \dk{-- this is a contradiction with a constant probability. Below we provide more details for this argument.}

Since $a - b > 0$,
every non-faulty node must participate in at least one exchange 
to reach a correct stable configuration.
An exchange is {\em first-dual} if it is the first exchange for both involved nodes.
The adversary observes first-dual exchanges between two nodes with initial preference $A$, 
and then it makes the two nodes faulty such that
they will behave as honest nodes with initial preference $B$.
In the next exchanges, 
the remaining nodes will have to assume that those two faulty nodes are honest 
with initial preference $B$,
since the faulty nodes have not participated in any other exchange before their conversion to faulty.

The Byzantine adversary will attempt to execute the above process $\lfloor f/2 \rfloor$ times
to convert in total $2\lfloor f/2 \rfloor \leq f$ 
nodes with initial preference $A$ 
to faulty nodes that behave as honest that have initial preference $B$.
Since $f - 1 \leq 2 \lfloor f/2 \rfloor$,
we get $a - b < 2(f - 1) \leq 4 \lfloor f/2 \rfloor$. 
Thus, $a - 2\lfloor f/2 \rfloor < b + 2\lfloor f/2 \rfloor$, 
and hence, \dk{after the adversarial conversion of $2 \lfloor f/2 \rfloor$ nodes, value} $B$ becomes the majority.
Thus, at the end of the algorithm the nodes will decide $B$
overturning the correct outcome $A$, \dk{which is the sought contradiction}.

\dk{It remains to prove a lower bound on the probability that such adversarial scenario is possible -- i.e., }
a lower bound for the probability of success of the Byzantine adversary to 
grab $\lfloor f/2 \rfloor$ first-dual exchanges where both nodes have initial preference $A$.
Let $X$ be the nodes with initial preference $A$.
Let $Q$ be the first $n/8$ exchanges in the algorithm execution. \dk{Recall here that exchanges are chosen by the random scheduler; in particular, w.l.o.g., we could look at the scheduler in the following way -- it first randomly generates a set of $n/8$ pairs of nodes, which defines set $Q$; next, it randomly orders these pairs, to determine which of the exchanges are actually first-dual.}\footnote{\dk{Such splitting of scheduler randomness into first selecting a random set of elements (of pairs, in our case) and then considering an independent random order (permutation) of these elements, has been already considered in the literature, cf.,~\cite{DBLP:conf/esa/KesselheimRTV13}}}
Let $Q' \subseteq Q$ be all the first-dual exchanges in $Q$, where both nodes are in $X$.
\dk{Note that in order to determine $Q'$, one needs the randomly selected set $Q$ and a random order of these pairs, as explained earlier.}
Since $|Q| = n/8$, we also have that $|Q'| \leq n/8$, which implies that $Q'$ involves 
$2|Q'| \le n/4$ nodes from $X$.
Since $|X| \geq n/2$ (recall that $A$ is the majority),
at least $n/4$ nodes from $X$ are not involved in any exchange in \dk{$Q'$.}

We say that the exchanges in $Q$ are successful \cbr{if} they are also in $Q'$, 
and are unsuccessful if they are in $Q \setminus Q'$. 
\dk{Consider the process of generating the ordered set of pairs $Q$ by the scheduler, one by one.}
Since at least $n/4$ nodes from $X$ are not involved in any exchange in $Q$,
\dk{when generating next pair in the ordered set $Q$}
there are at least $n/4$ 
nodes from $X$ \dk{which have not been in any previously selected pair.}
\dk{Hence, the probability that a newly generated pair in $Q$ is successful is at least $\frac {{n/4 \choose 2}} {{n^2/2}}$,
where ${n/4} \choose 2$ is the lower bound on the number of pairs that could be created from nodes that have not been yet selected and ${{n} \choose {2}}$ is the number of all possible pairs of nodes. Using the bound ${{n} \choose {2}} < n^2/2$ and assumption $n \geq 2 \cdot 132 > 8$, we further estimate the probability of generating a successful pair from above as follows:}
$$
\frac {{n/4 \choose 2}} {{n^2/2}} 
\ \ = \ \ \frac {(n/4)(n/4 - 1)} {n^2}
\ \ \geq \ \ \frac {1} {32} \ .
$$
Therefore, an exchange in $Q$ is unsuccessful with probability at most $31/32$.

Let $Y = |Q| - |Q'|$ be a random variable denoting the number of unsuccessful exchanges in $Q$.
We have $E[Y] \leq |Q| \cdot 31 / 32 = n/8 \cdot 31 / 32$.
Let $q = (32)^2 / (31 \cdot 33) = 1024/1023$.
From Markov's inequality,
\begin{eqnarray*}
Pr[Y \geq n/8 \cdot 32/33]
&=& Pr[Y \geq n/8 \cdot 31/32 \cdot q]\\
&\leq& Pr[Y \geq E[Y] \cdot q] 
\ \ \leq \ \ 1/q \ .
\end{eqnarray*}
Since $|Q| = n/8$, we get $Y = n/8 - |Q'|$. 
Hence,
\begin{eqnarray*}
Pr[|Q'| > n/8 \cdot 1/33] 
& = & Pr[Y < n/8 \cdot 32 / 33] \\
& > & 1 - 1/q 
\ \ = \ \ 1/1024 \ .
\end{eqnarray*}
Hence, with probability at least $1/1024$
the number of faulty nodes $f$ in the execution scenario can be such that
$\lfloor f/2 \rfloor \geq |Q'| > n/8 \cdot 1/33$, 
or equivalently $f \geq 2 \lfloor f / 2 \rfloor > n / 4 \cdot 1 / 33 = n / 132$.
This implies that with probability at least $1/1024$,
for any $f$ such that $2 \leq f \leq n/132$,
the Byzantine adversary successfully finds $2 \lfloor f / 2 \rfloor$
nodes to~convert~to~faulty.

Combining the probability of $1/1024$ with the probability of $1/2$ that the 
Byzantine adversary has correctly guessed the minority preference $B$,
we have that the above execution scenario can occur with probability at least $1/2 \cdot 1/1024 = \Theta(1)$.
\end{proof}

A variant of Lemma~\ref{lemma:lower-weak-adaptive}
could be also proved for a slightly weaker 
content-oblivious adversary, however for a smaller difference $d$.

\begin{lemma}
\label{lemma:lower-weak-oblivious}
For any $f$ such that $2 \leq f \leq n / 132$, if $|a - b| < f - 1$ then for any deterministic majority algorithm there is an execution scenario such that the {\bf Weak Content-Oblivious Byzantine adversary} causes the minority value to be decided with probability~$\Theta(1)$.
\end{lemma}

\begin{proof}
The proof is almost identical to the proof of Lemma~\ref{lemma:lower-weak-adaptive}, except that the adversary cannot see the state information of the exchange and cannot identify first-dual exchanges with probability $1$, as in the other proof.
However, by observing the traffic it could identify {\em first-dual-oblivious} exchanges, i.e.,  
exchanges which are first exchanges for their both participants (without guarantee that they have the same and desired initial value). Therefore, each among the corrupted $\lfloor f/2 \rfloor$ pairs may have two, one or none desired value to be flipped by corrupted nodes ($A$).
If $f \leq n/132$, the probability of getting at least one value $A$ in the first-dual-oblivious exchange is at least $1/2-2f/n \ge 2/5$, therefore the probability in the proof of Lemma~\ref{lemma:lower-weak-adaptive} needs to be adjusted by factor $2/5$ (which is still only another constant factor, and could be hidden in ``$\Theta$'' notation), while the difference $|a-b|$ will be only by half affected comparing with the case of the weak adversary, hence 
$|a-b|< f-1$.
\end{proof}

\remove{
We will prove that during $k$ exchanges the number of distinct honest nodes that have performed exchanges with the faulty nodes
is $\Omega(k f / n)$.

\begin{lemma}
During $k$ successive exchanges, the number of distinct honest nodes that have performed exchanges with faulty nodes is $\Omega(k f / n)$ with probability at least $1 - o(1)$.
\end{lemma}

\begin{proof}
Let $p=2(a-kf/n)f/n^2$.
Choose a set $B\subseteq A$ of distinct honest nodes interacting with Byzantine nodes.
The probability that at least one node in $A\setminus B$ is selected in $kf/n$ interactions is:
\[
(1-p)^{k} =
(1-2(a-kf/n)f/n^2)^{k}
\le 
\exp(-2kf(a-kf/n)/n^2)
\ .
\]
The probability that there is a set $B$ of size $kf/n\le a/2$ such that an honest element outside it is selected is at most
\[
\binom{a}{kf/n} \exp(-2kf(a-kf/n)/n^2)
\le
(ane/(kf))^{kf/n}
\exp(-2kf(a-kf/n)/n^2)
\]
\[
\le 
\exp((kf/n)\ln (ane/(kf)) - 2kf(a-kf/n)/n^2)
\]
\[
\le 
\exp((kf/n)[\ln (ane/(kf)) - 2(a-kf/n)/n])
\]
For $kf/n=a/2$ we have:
\[
\binom{a}{kf/n} \exp(-2kf(a-kf/n)/n^2)
\le
2^a/\sqrt{a} \cdot
\exp(-a^2/n)
\]
\[
\le 
\exp(a(\ln 2 - a/n))/\sqrt{a}
\]
\end{proof}
}
\cbr{
\section{Skeleton Algorithm}
\label{section:skeleton}
}
\begin{algorithm}[t]
{\footnotesize \cbr{
\caption{\sf Skeleton algorithm for node $u$}
\label{alg:skeleton}
\tcp{Initialization for node $u$}
Preference is in $value_u \in \{A,B, \bot \}$, initially set to $A$ or $B$\;
$decision_u \gets \bot$; 
$saved_u \gets \bot$; 
$C_u \gets -1$; $phase_u \gets 0$\;

\BlankLine
\tcp{Actions on exchange $e_i$ involving $u$}
\ForEach{exchange $e_i = \{ u,v \}$}{
\tcp{$D = \Theta(\log^2 n)$ is the phase duration}
\tcp{$maxPhases = \Theta(\log n)$}

\If{$phase_u \leq maxPhases$}{
$C_u \gets (C_u + 1) \mod D$; \tcp{increment counter}
\If{$C_u = 0$}{
$phase_u$++; \tcp{enter new phase}
$saved_u \gets value_u$; \tcp{save current value}
}
}
\If{$phase_u > maxPhases$}{Skip next steps of for-loop\;}

\BlankLine
\tcp{From $C_u$ determine the current phase kind} 
\BlankLine
\tcp{Cancellation Phase}
\tcp{\cbr{Phase repeats $\gamma = \Theta(1)$ times}}
\If{$u$ is in cancellation phase and $phase_u = phase_v$}
{Cancellation Phase Actions\;
}
\BlankLine

\tcp{Resolution Phase}
\If{$u$ is in the resolution phase and $decision_u = \bot$}
{
\If{$u$ is in first subphase}
{$probe_u^A \gets 0$; $probe_u^B \gets 0$\;}
\If{$u$ is in the second subphase}
{$u$ invokes Algorithm \ref{alg:decision-check}: {\sf Decision-Check}$(e_i)$\;}
}

\BlankLine
\tcp{Duplication Phase}
\If{$u$ is in duplication phase and $phase_u = phase_v$}
{Duplication Phase Actions\;
}
}
}
}
\end{algorithm}

\begin{algorithm}[t]
{\footnotesize
\caption{{\sf Decision-Check}$(e_i)$}
\label{alg:decision-check}
\tcp{$e_i = \{u,v\}$ and algorithm is invoked by $u$}

\If{$value_v = A$} {$probe_u^A$++; \tcp{$u$ sampled A}}
\If{$value_v = B$} {$probe_u^B$++; \tcp{$u$ sampled B}}
\tcp{$\psi = O(\log n)$ is number of samples}
\tcp{$\sigma_1 = O(\log n), \sigma_2 = O(\log n)$ are threshold values}
\If{$e_i$ is the $\psi$th exchange in the second subphase of $u$}{
\If{$probe_u^B \leq \sigma_1 \wedge probe_u^A \geq \sigma_2$ } 
{$decision_u \gets A$; \tcp{$u$ decides $A$}}
\If{$probe_u^A \leq \sigma_1 \wedge probe_u^B \geq \sigma_2$ } 
{$decision_u \gets B$; \tcp{$u$ decides $B$}}
}
}
\end{algorithm}

\cbr{
Algorithm \ref{alg:skeleton} is the skeleton psudocode for node $u \in V$.  
This skeleton is common for Algorithms \ref{alg:acpd} (\acpd) and Algorithm \ref{alg:scfd} (\scfd).
Each exchange that involves node $u$ triggers the following
steps: (i) first update a local counter and a phase counter and (ii) then 
proceed to execute one of the the cancellation, resolution, and duplication phases.

In the algorithms we assume that each node 
maintains its state in a set of local variables.
In an exchange between an honest and faulty node, 
if the faulty node does not provide access to the value of its local 
variable, then the honest node assumes an arbitrary value for that variable.
}

\vspace*{1ex}
\noindent
\cbr{{\bf Local counter:}}
Suppose that the randomized scheduler generates a sequence of exchanges $e_1, e_2, \ldots$,
where $e_i \in \{\{u,v\}:u,v \in V \}$.
\cbr{The {\em local counter} $C_u$ is a variable of node $u \in V$ (saved in the state of $u$)}
which is 
incremented in each exchange that $u$ participates in,
so that at the $r$th exchange by the scheduler, $C_u = |\{e_i: u \in e_i \wedge i \leq r\}|$.
The local counter is incremented modulo $D = \Theta(\log^2 n)$,
which means that it takes $\Theta(\log \log n)$ bits to save the counter locally.
We use the local counters to synchronize the actions of the nodes.
In Lemma \ref{lemma:drift} we show that within the first $\Theta(n \log^3 n)$ system exchanges,
with high probability any two local counters differ by at most $O(\log^2 n)$.

\noindent
\cbr{{\bf Phase counter:}}
The $O(\log^2 n)$ bound on the maximum local counter difference 
enables us to divide the execution of the algorithm 
into phases such that the duration of a phase is $D = \Theta(\log^2 n)$ exchanges.
Each node has its own perception of a phase based on its local counter.
A phase for a node $u$ consists of $D$ consecutive exchanges
as measured by the local counter of $u$.
Node $u$ keeps track of the phase count that it is currently in
with the use of a variable $phase_u$ that increments 
each time that the local counter $C_u$ becomes $0$.
The $C_u$ value is also used to determine the current phase kind 
(cancellation, resolution, duplication).

Because local counters may differ,
nodes could be in different phases.
The value $D$ is chosen so that any two nodes are at any moment
in two consecutive phases of the algorithm, with high probability.
Moreover, in order to have all the nodes at some point in the same phase concurrently,
we divide each phase into three equal length subphases, 
each consisting of $D/3$ exchanges.
The value $D$ is chosen so that if one node $u$ is in the middle subphase of \cb{the} phase, 
then the other nodes are with high probability within the same phase,
that is, $phase_u = phase_v$
(but $v$ is in any subphase~of~$phase_v$).
\cbr{
In particular, $D = 6 \lceil \zeta \rceil$, where $\zeta = \sqrt{12 c} \ln^2 n$ (Lemma~\ref{lemma:drift}),
for some appropriate constant $c$
which is the factor in the asymptotic expression
for the number of exchanges $O(n \log^3 n)$ as
inferred from the analysis of Theorems \ref{theorem:algo1} and \ref{theorem:algo2}.}

\noindent
\cbr{{\bf Decision value:}}
Each node $u$ starts with an initial value which is either $A$ or $B$, 
stored in local variable $value_u$.
During the execution of the algorithm the contents of $value_u$ change according to the 
current preference of the node.
Moreover,
during cancellation exchanges the content of the $value_u$ may become empty, denoted as $\bot$.
Therefore, throughout the algorithm execution, $value_u \in \{A, B, \bot\} $.
Just before node $u$ enters a new phase,
it saves the current value in the local variable $saved_u$.

Node $u$ has also a local variable $decision_u$ to store the majority value that it decides.
Initially $decision_u = \bot$ and eventually it sets the value of $decision_u$ to either $A$ or $B$.
When $u$ sets $decision_u \neq \bot$ we also say that the node $u$ has ``decided'', 
meaning that it has reached a stable state for the output 
where it will no longer change its decision further.
\cbr{A node decides in a resolution phase, but not all nodes may decide in the same 
resolution phase.
A decided node is still participating in subsequent 
exchanges and executes the other phases,
but is not updating its decision.
}

\noindent
\cbr{
{\bf Phase execution:}
There are three types of phases:}
\begin{itemize}
\item 
\cbr{
{\em Cancellation phase:} non-empty nodes with value $A$ or $B$, attempt to match their value with an opposite value, where the opposite of $A$ is $B$ and vice-versa. This results to empty nodes with value $\bot$.
}
\item 
\cbr{
{\em Duplication phase:} non-empty nodes with value $A$ or $B$ duplicate their values into empty nodes. 
}

\item 
\cbr{
{\em Resolution phase:} nodes attempt to make a decision on the majority value.
The resolution phase is implemented by invoking Algorithm {\sf Decision-Check}$(e_i)$ 
 (Algorithm \ref{alg:decision-check}).}
Node $u$ checks in the first $\psi$ exchanges of its second resolution subphase the number of times that it observes the values $A$ and $B$ in other nodes during the exchanges (a form of sampling values of other nodes). These numbers are recorded in local variables $probe_u^A$ and $probe_u^B$ (see Algorithm \ref{alg:decision-check}). If $probe_u^A \geq\sigma_2$ while $probe_u^B \leq \sigma_1$, then node $u$ decides $A$. If $probe_u^B \geq\sigma_2$ while $probe_u^A \leq \sigma_1$ then node $u$ decides $B$. Otherwise, no decision is made. The decision is recorded in the local variable $decision_u$ that gets a value in $\{A,B\}$. In the analysis we give the precise expressions for $\psi$, $\sigma_1$, and $\sigma_2$. 
\cbr{If a decision is made then 
it is for the majority value.}
\end{itemize}

\cbr{The actual order of phase execution is first cancellation, 
then resolution, followed by duplication.}
A node starts in the cancellation phase and it actually repeats the cancellation phase a constant $\gamma = \Theta(1)$ consecutive times before proceeding to a different phase kind.
Then it continues to the resolution phase once, and finally to the duplication phase also once.
This cycle repeats $O(\log n)$ times where each cycle involves $\gamma + 2 = \Theta(1)$ phases.
The majority decision 
can be made during any of the cycles.
The final decision value for $u$ appears in $decision_u$.

There is an unlikely (with low probability) 
scenario that some node $u$ does not reach a decision.
In that case, the $phase_u$ variable \cbr{exceeds} an upper threshold value $maxPhases = \Theta(\log n)$,
for an appropriate choice of constant factor in the asymptotic formula
as determined in the analysis,
while $decision_u = \bot$.
The bound of $maxPhases$ prevents the variable $phase_u$ from increasing \cbr{indefinitely}.

\cbr{We continue with a basic analysis of Algorithm~ \ref{alg:skeleton}
which is common in both instances in Algorithms \ref{alg:acpd} and \ref{alg:scfd}.}

\cbr{
\subsection{Local Counter Analysis}
We prove a basic result for the local counters of Algorithm~ \ref{alg:skeleton}.
We use the following version of the Chernoff bound.
}

\begin{lemma}[Chernoff bound]
	\label{lemma:chernoff}
	Let $X_1,X_2,\cdots,X_{m}$ be independent Poisson trials 
	such that, for $1\leq i\leq m$, 
	$X_i \in \{0,1\}$.
	Then, for $X=\sum_{i=1}^{m}X_i$, $\mu=E[X]$,
	\cb{ and any $\delta > 1$,
	$\Pr[X \geq (1+\delta) \mu] \leq e^{-\frac{\delta \mu} 2}$,
	and any $0 \leq \delta \leq 1$,
 $\Pr[X \geq (1+\delta) \mu] \leq e^{-\frac{\delta^2 \mu} 3}$
    and $\Pr[X \leq (1-\delta) \mu] \leq e^{-\frac{\delta^2 \mu} 2}$.}
\end{lemma}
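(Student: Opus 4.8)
The plan is to give the textbook exponential-moment (Bernstein--Chernoff) argument, handling the two tails separately; nothing about the population-protocol setting is used, so the proof is purely a calculation with independent $0/1$ variables.

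For the upper tail I would fix a parameter $t>0$. Since $e^{tX}\ge 0$, Markov's inequality gives $\Pr[X\ge(1+\delta)\mu]=\Pr\big[e^{tX}\ge e^{t(1+\delta)\mu}\big]\le e^{-t(1+\delta)\mu}\,E\big[e^{tX}\big]$. Independence of the $X_i$ lets me factor $E[e^{tX}]=\prod_{i=1}^m E[e^{tX_i}]$, and writing $p_i=\Pr[X_i=1]$ we have $E[e^{tX_i}]=1+p_i(e^t-1)\le \exp\!\big(p_i(e^t-1)\big)$ using $1+x\le e^x$; multiplying and using $\sum_i p_i=\mu$ yields $E[e^{tX}]\le \exp\!\big(\mu(e^t-1)\big)$. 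Hence $\Pr[X\ge(1+\delta)\mu]\le \exp\!\big(\mu(e^t-1)-t(1+\delta)\mu\big)$, and the minimizing choice $t=\ln(1+\delta)\ge 0$ gives the classical estimate $\Pr[X\ge(1+\delta)\mu]\le\big(e^{\delta}/(1+\delta)^{1+\delta}\big)^{\mu}$. It then remains to bound $e^{\delta}/(1+\delta)^{1+\delta}$ by $e^{-\delta^2/3}$; in logarithmic form this is $h(\delta):=\delta-(1+\delta)\ln(1+\delta)+\delta^2/3\le 0$, which I would get from $h(0)=0$ together with a sign analysis of $h'(\delta)=\tfrac{2\delta}{3}-\ln(1+\delta)$ (here $h'(0)=0$ and $h''(\delta)=\tfrac{2\delta-1}{3(1+\delta)}$, so $h'$ first dips below $0$ and, on the relevant range, stays $\le 0$, forcing $h$ to decrease from $0$).

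For the lower tail the computation is symmetric: for $t>0$, $\Pr[X\le(1-\delta)\mu]=\Pr\big[e^{-tX}\ge e^{-t(1-\delta)\mu}\big]\le \exp\!\big(\mu(e^{-t}-1)+t(1-\delta)\mu\big)$, and taking $t=-\ln(1-\delta)\ge 0$ (legitimate since $0\le\delta\le 1$) gives $\Pr[X\le(1-\delta)\mu]\le\big(e^{-\delta}/(1-\delta)^{1-\delta}\big)^{\mu}$; one then checks $e^{-\delta}/(1-\delta)^{1-\delta}\le e^{-\delta^2/2}$, i.e. $g(\delta):=-\delta-(1-\delta)\ln(1-\delta)+\delta^2/2\le 0$ on $[0,1]$, which is clean because $g(0)=0$, $g'(0)=0$, and $g''(\delta)=-\delta/(1-\delta)\le 0$, so $g$ is non-increasing from $0$.

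The only real work — and the only place care is needed — is verifying the two scalar inequalities $e^{\delta}/(1+\delta)^{1+\delta}\le e^{-\delta^2/3}$ and $e^{-\delta}/(1-\delta)^{1-\delta}\le e^{-\delta^2/2}$: both sides agree to second order at $\delta=0$, so the comparison is tight there, and the upper-tail one additionally needs $\delta$ bounded (it is standardly stated for $0\le\delta\le 1$, the regime in which the lemma is actually invoked; for the lower tail the constraint $\delta\le 1$ is automatic). Everything else — Markov's inequality, the product form of the moment generating function, $1+x\le e^x$, and the optimal choice of $t$ — is entirely routine, and the statement is the familiar Chernoff bound so I expect no surprises beyond those two elementary estimates.
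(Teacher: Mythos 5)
The paper never proves this lemma---it is quoted as a standard Chernoff bound and used as a black box---so the only meaningful comparison is with the textbook argument, which is exactly what you give: Markov's inequality applied to $e^{tX}$, factorization of the moment generating function by independence, $1+x\le e^x$, the optimizing choice of $t$, and the two scalar estimates. Your calculation is correct as far as it goes: the sign analyses of $h$ and $g$ are right, and the lower-tail bound $e^{-\delta^2\mu/2}$ does hold on all of $[0,1]$.

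The one substantive issue is the point you yourself flagged: the lemma asserts the upper-tail bound $e^{-\delta^2\mu/3}$ for \emph{any} $\delta\ge 0$, but your argument (and any argument) only yields it for bounded $\delta$; your $h(\delta)\le 0$ holds on $[0,1]$, while for instance $h(3)>0$, and the statement as written is in fact false for large $\delta$: a single Bernoulli$(p)$ trial with $1+\delta=1/p$ has $\Pr[X\ge(1+\delta)\mu]=p$, which exceeds $e^{-\delta^2\mu/3}\approx e^{-1/(3p)}$ for small $p$. So what you prove is the lemma with the upper tail restricted to $0\le\delta\le 1$, or, if one wants all $\delta\ge 0$, with exponent $-\delta^2\mu/(2+\delta)$, which follows from your optimal-$t$ bound via $\ln(1+\delta)\ge 2\delta/(2+\delta)$. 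This is a defect of the lemma statement rather than of your proof; it is harmless for most of the paper's applications, which use $\delta\le 1$, but note that the proof of Lemma~\ref{lemma:honest-Byzantine} invokes the upper tail with $\delta=1/2+k/\mu$, which can exceed $1$ when $f$ is small compared to $\log n$---there the $(2+\delta)$-form of the exponent is the clean way to justify the estimate $e^{-(\mu/12+k/3)}$ actually used.
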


We continue with bounding the difference between the local counters of the nodes
which is important for synchronizing the phases of the nodes.
In the next lemma, the constant $c$ is used adjust the length of the phases according to the needs of the algorithms.
In many explicit formulas, we use the natural logarithm $\ln x$ that has base $e$ while
in asymptotic \cbr{notations} we use $\log x$.

\begin{lemma}
\label{lemma:drift}
\cbr{For $n \geq e^2$,} 
at the $r$th exchange issued by the scheduler, 
where $r \leq c n \ln^3 n$
\cbr{and constant $c \geq 1$},
for all pairs of nodes $u$ and $v$
it holds that $|C_u - C_v| < 2 \zeta$,
where $\zeta = \sqrt{12 c} \ln^2 n$,
with probability at least $1 - 2/n$.
\end{lemma}

\begin{proof}
For $n$ nodes 
there are ${{n} \choose {2}}$ possible exchange pairs,
out of which a node belongs in $n-1$ possible exchange pairs.
Thus, a node is included in a randomly chosen exchange 
with probability $p = (n-1) / {n \choose 2} = 2 / n$.

For a node $u$,
let $X^{(i)}_u$ be the random variable such that $X^{(i)}_u = 1$ if $u$ is picked at the $i$th exchange,
and otherwise, $X^{(i)}_u = 0$.
The $X^{(i)}_u$ (for any given $u$) 
are independent Bernoulli trials
with $\Pr[X^{(i)}_u =  1] = p$.

The number of times that node $u$ was picked during $r$ exchanges 
is $C_u = \sum_{i=1}^{r} X^{(i)}_u$. 
We have that the expected value of $C_u$ is $\mu = E[C_u] = r p = 2 r / n$.
Let $\delta = \zeta / \mu$ (note that $\delta \leq 1$ \cbr{for $n \geq e^2$}).
From Lemma \ref{lemma:chernoff},
\begin{eqnarray*}
\Pr[C_u \geq \mu + \zeta]
& = & \Pr[C_u \geq (1 + \zeta/\mu)\mu] \\
& = & \Pr[C_u \geq (1 + \delta)\mu] \\
& \leq & e^{-\frac{\delta^2 \mu} 3}
\ \  = \ \  e^{-\frac{\zeta^2 } {3 \mu}}
\ \ \leq \ \ e^{-\frac{12 c \ln^4 n} {6 c \ln^3 n}} \\
& = & e^{-2\ln n} 
\ \ = \ \ 1/n^2 \ .
\end{eqnarray*}
Similarly,
\begin{eqnarray*}
\Pr[C_u \leq \mu - \zeta]
& = & \Pr[C_u \leq (1 - \zeta/\mu)\mu]\\
& = & \Pr[C_u \leq (1 - \delta)\mu]\\
& \leq & e^{-\frac{\delta^2 \mu} 2}
\ \ = \ \ e^{-\frac{\zeta^2 } {2 \mu}}
\ \ \leq \ \ e^{-\frac{12 c \ln^4 n} {4 c \ln^3 n}} \\
& = & e^{-3\ln n}
\ \ = \ \ 1/n^3 \ .
\end{eqnarray*}
Therefore,
$\Pr[|\mu - C_u| \geq \zeta] \leq 1 / n^2 + 1/n^3 < 2/n^2$.
Considering all the $n$ nodes and using the union bound,
any of these nodes has its local counter to differ by $\zeta$ or more from $\mu = 2r/n$ 
with probability at most $n \cdot 2/n^2 = 2/n$.
Thus, with probability at least $1 - 2/n$,
for all nodes $u$ and $v$, $|C_u - C_v| < 2 \zeta$,
as needed.
\end{proof}

\cbr{\subsection{Resolution Phase Analysis}}

Next, we consider the resolution phase.
\cbr{We analyze Algorithm \ref{alg:decision-check}.}
Suppose that $f \leq n/c_f$, for some constant $c_f$.
In the resolution phase a node observes the first $\psi = c_\psi \ln n$ exchanges in the second sub-phase, for some constant $c_\psi$.
During the $\psi$ exchanges a node counts the number of exchanges with nodes that have value $A$ and with nodes that have value $B$.
Let $\sigma_1 = c_{\sigma_1} \ln n$ and $\sigma_2 = c_{\sigma_2} \ln n$, for some constants $c_{\sigma_1}, c_{\sigma_2} \leq c_\psi $. A node $u$ decides value $X \in \{A,B\}$ if $X$ appears in at least $\sigma_2$ exchanges (out $\psi$), while at most $\sigma_1$ exchanges are with nodes that have the other opposite value $Y \in \{A,B\} \setminus \{ X \} $.
Otherwise, there is no decision made in the resolution phase by node $u$.
In the following two lemmas we \cbr{use as constants} $\xi, \xi_1, \xi_2 \geq 1$, where $\xi_1 \geq \xi \geq \xi_2$,  such that:
\begin{align}
 \label{eqn:ksi1}
 2 c_{\sigma_1} \leq & \left (\frac {1} {\xi} + \frac {1} {c_f}\right ) c_\psi \leq  \frac {c_{\sigma_2}} {2} \ , \\
 \label{eqn:ksi2}
 & \left (\frac {1} {\xi_1} + \frac {1} {c_f} \right) c_\psi \leq \frac {c_{\sigma_1}} {2} \ ,\\
 \label{eqn:ksi3}
 2 c_{\sigma_2} \leq & \left (\frac {1} {\xi_2} + \frac{1} {c_f} \right) c_\psi \ .
\end{align}
We can assume the following example constants that satisfy Equations \ref{eqn:ksi1}, \ref{eqn:ksi2}, \ref{eqn:ksi3}: $c_f = 256, \xi = 32, \xi_1 = 256, \xi_2 = 4, c_{\sigma_1} = 12, c_{\sigma_2} = 8 c_{\sigma_1}, c_\psi = 64 c_{\sigma_1}$.
The next sequence of lemmas will lead to Lemma \ref{lemma:resolution} for the resolution phase.

\begin{lemma}
\label{lemma:resolution-min1}
In the resolution phase,
if a node observes value $X$ in at most $\sigma_1$ exchanges,
then  $X$ is the current value of at most $n/\xi$ honest nodes with probability at least $1 - 1/n$.
\end{lemma}

\begin{proof}
Consider the $\psi$ exchanges of honest node $u$ during the second sub-phase of the resolution phase. 
Suppose that the number of honest nodes that have the value $X$ at the beginning of the phase is $x > n/\xi$.
The probability that an exchange $\{ u,v\}$ occurs where $v$ is either a honest node with value $X$ 
or a faulty node pretending to have value $X$ is at least $(x + f) / n > 1/\xi + 1/c_f $.
Thus, 
using Equation \ref{eqn:ksi1},
for $(1/\xi + 1/c_f) \cdot c_\psi \geq 2 c_{\sigma_1}$,
$c_{\sigma_1} = 12$, the expected count $X$ of the majority value by node $u$ is at least %
\cbr{
\begin{eqnarray*}
E[X] 
& > &  (1/\xi + 1/c_f) \cdot \psi
 =  (1/\xi + 1/c_f) \cdot c_\psi \ln n \\
& \geq & 2 c_{\sigma_1} \ln n
 =  24 \ln n \ .
\end{eqnarray*}
}
From Lemma \ref{lemma:chernoff},
for $\delta = 1/2$ and $\mu = 24\ln n$, 
the probability that the count $X$ is at most $\sigma_1 = c_{\sigma_1} \ln n = 12\ln n = \mu/2 = (1 -\delta) \mu$,
is at most $e^{-\delta^2 \mu / 2} = e^{- 3 \ln n} \leq n^{-2}$.

Thus, with probability at most $n \cdot n^{-2} = n^{-1}$, 
the count for value $X$ is at most $\sigma_1$ in any of the $n$ nodes.
Hence, with probability at least $1 - 1/n$ the count for value $X$ is more than $\sigma_1$ in all the nodes.
Thus, if a node observes that the count for $X$ is at most $\sigma_1$ then with probability at least $1 - 1/n$ 
it holds that $x \leq n/\xi$.
\end{proof}

\begin{lemma}
\label{lemma:resolution-min2}
In the resolution phase,
if $X$ is the current value of at most $n/\xi_1$ honest nodes
then every node observes value $X$ in at most $\sigma_1$ exchanges,
with probability \cbr{at least $1-1/n$}.
\end{lemma}

\begin{proof}
Consider the $\psi$ exchanges of honest node $u$ during the second sub-phase of the resolution phase. 
Suppose that the number of honest nodes that have the value $X$ at the beginning of the phase is $x \leq n/\xi_1$.
The probability that an exchange $\{ u,v\}$ occurs where $v$ is either a honest node with value $X$ 
or a faulty node pretending to have value $X$ is at most $(x + f) / n \leq 1/\xi_1 + 1/c_f $.
Thus, 
using Equation \ref{eqn:ksi2}, 
for $(1/\xi_1 + 1/c_f) \cdot c_\psi \leq c_{\sigma_1}/2$,
$c_{\sigma_1} = 12$, the expected count $X$ of the majority value by node $u$ is at most
\cbr{
\begin{eqnarray*}
E[X] 
& \leq &  (1/\xi_1 + 1/c_f) \cdot \psi
 =  (1/\xi_1 + 1/c_f) \cdot c_\psi \ln n\\
& \leq & (c_{\sigma_1} \ln n)/2
 =  \sigma_1/2 = 6 \ln n \ .
\end{eqnarray*}
}
From Lemma \ref{lemma:chernoff},
for $\delta = 1$ and $\mu = 6\ln n$, 
the probability that the count $X$ is at least $\sigma_1 = c_{\sigma_1} \ln n = 12\ln n = 2\mu = (1 +\delta) \mu$,
is at most $e^{-\delta^2 \mu / 3} = e^{- 2 \ln n} = n^{-2}$.

Thus, with probability at most $n \cdot n^{-2} = n^{-1}$, 
the count for value $X$ is at least $\sigma_1$ in any of the $n$ nodes.
Hence, with probability at least $1 - 1/n$ the count for value $X$ is at most $\sigma_1$ in all the nodes.
\end{proof}

\begin{lemma}
\label{lemma:resolution-max1}
In the resolution phase,
if a node observes value $Y$ in at least $\sigma_2$ exchanges, 
then  $Y$ is the current value of more than $n/\xi$ honest nodes with probability at least $1 - 1/n$.
\end{lemma}

\begin{proof}
Consider the $\psi$ exchanges of honest node $u$ during the second sub-phase of the resolution phase. 
Suppose that the number of honest nodes that have the value $Y$ at the beginning of the phase is $y \leq n/\xi$.
The probability that an exchange $\{ u,v\}$ occurs where $v$ is either a honest node with value $Y$ 
or a faulty node pretending to have value $Y$ is at most $(y + f) / n \leq 1/\xi + 1/c_f $.
Thus, 
using Equation \ref{eqn:ksi1},
for $(1/\xi + 1/c_f) \cdot c_\psi \leq c_{\sigma_2}$/2,
$c_{\sigma_2} = 96$ (recall that $c_{\sigma_2} = 8 c_{\sigma_1} = 8 \cdot 12 = 96$), the expected count $Y$ of the majority value by node $u$ is at most 
\cbr{
\begin{eqnarray*}
E[Y] 
& \leq & (1/\xi + 1/c_f) \cdot \psi
 =  (1/\xi + 1/c_f) \cdot c_\psi \ln n \\
& \leq & (c_{\sigma_2} \ln n)/2
 =  48 \ln n \ .
\end{eqnarray*}
}
From Lemma \ref{lemma:chernoff},
for $\delta = 1$ and $\mu = 48\ln n$, 
the probability that the count $Y$ is at least $\sigma_2 = c_{\sigma_2} \ln n = 96\ln n = 2 \mu = (1 +\delta) \mu$,
is at most $e^{-\delta^2 \mu / 3} = e^{- 48/3 \cdot \ln n}\leq e^{- 2 \ln n} = n^{-2}$.

Thus, with probability at most $n \cdot n^{-2} = n^{-1}$, 
the count for value $Y$ is at least $\sigma_2$ in any of the $n$ nodes.
Hence, with probability at least $1 - 1/n$ the count for value $Y$ is less than $\sigma_2$ in all the nodes.
Thus, if a node observes that the count for $Y$ is at least $\sigma_2$ then with probability at least $1 - 1/n$ 
it holds that $y > n/\xi$.
\end{proof}

\begin{lemma}
\label{lemma:resolution-max2}
In the resolution phase,
if $Y$ is the current value of at least $n /\xi_2$ honest nodes
then every node observes value $Y$ in at least $\sigma_2$ exchanges,
with probability \cbr{at least $1-1/n$}.
\end{lemma}

\begin{proof}
Consider the $\psi$ exchanges of honest node $u$ during the second sub-phase of the resolution phase. 
Suppose that the number of honest nodes that have the value $Y$ at the beginning of the phase is $y \geq n /\xi_2$.
The probability that an exchange $\{ u,v\}$ occurs where $v$ is either a honest node with value $Y$ 
or a faulty node pretending to have value $Y$ is at least $(y + f) / n \geq 1/\xi_2 + 1/c_f $.
Thus, 
using Equation \ref{eqn:ksi3},
for $(1 / \xi_2 + 1/c_f) \cdot c_\psi \geq 2 c_{\sigma_2}$,
$c_{\sigma_2} = 96$ (recall that $c_{\sigma_2} = 8 c_{\sigma_1} = 8 \cdot 12 = 96$), the expected count $Y$ of the majority value by node $u$ is at least 
\cbr{
\begin{eqnarray*}
E[Y] 
& \geq & (1/\xi_2 + 1/c_f) \cdot \psi
 =  (1/\xi_2 + 1/c_f) \cdot c_\psi \ln n\\ 
& \geq & 2 c_{\sigma_2} \ln n
 = 2 \cdot 96 \ln n \ .
\end{eqnarray*}
}
From Lemma \ref{lemma:chernoff},
for $\delta = 1/2$ and $\mu = 2 \cdot 96\ln n$, 
the probability that the count $Y$ is at most $\sigma_2 = c_{\sigma_2} \ln n = 96 \ln n = \mu/2 = (1 -\delta) \mu$,
is at most $e^{-\delta^2 \mu / 2} = e^{- 2 \cdot 96 /8 \cdot \ln n}\leq e^{- 2 \ln n} = n^{-2}$.

Thus, with probability at most $n \cdot n^{-2} = n^{-1}$, 
the count for value $Y$ is at most $\sigma_2$ in any of the $n$ nodes.
Hence, with probability at least $1 - 1/n$ the count for value $Y$ is at least $\sigma_2$ in all the nodes.
\end{proof}

We combine the above results about the resolution phase 
(Lemmas \ref{lemma:resolution-min1}, \ref{lemma:resolution-min2},
 \ref{lemma:resolution-max1}, and \ref{lemma:resolution-max2})
 into the following lemma.

\cbr{
\begin{lemma}
\label{lemma:resolution}
With probability at least $1-4/n$,
the following properties hold for a resolution phase:
\begin{itemize}
\item[i.]
All honest nodes that make a decision in the same resolution phase,
correctly decide on the majority value.

\item[ii.]
If at least $n/\xi_2$ non-empty nodes have majority value $Y \in \{A,B\}$, 
and at most $n/\xi_1$ non-empty honest nodes have minority value $X \in \{A,B\}, X \neq Y$, then all honest undecided nodes will decide the majority value.
\end{itemize}
\end{lemma}
}

\begin{proof}
\cbr{
Consider a resolution phase.
From Algorithm \ref{alg:decision-check}, 
in a resolution phase,
a node decides the value $Y \in \{A,B\}$ that is observed in at least $\sigma_2$ exchanges,
only if the other value in $X \in \{A,B\}, X \neq Y$, is observed in at most $\sigma_1 $ exchanges.

Property (i) is a consequence of Lemmas \ref{lemma:resolution-min1} and \ref{lemma:resolution-max1}.
That is, the threshold $n/\xi$ separates the majority $Y$ and minority $X$ values,
so that each honest node $u$ observes value $X$ in at most $\sigma_1$ exchanges
and value $Y$ in at \cbr{least} $\sigma_2$ exchanges,
with probability at least $1 - 2/n$.
Thus, value $X$ is the current minority and value $Y$ is the current majority among all the non-empty honest nodes that decide during the phase.

Property (ii) is a consequence of Lemmas \ref{lemma:resolution-min2} and \ref{lemma:resolution-max2}.
That is, if at most $n/\xi_1$ honest nodes have value $X$ and at least $n/\xi_2$ nodes have value $Y$, then each honest node will observe $X$ in at most $\sigma_1$ exchanges and value $Y$ in at least $\sigma_2$ exchanges,
with probability at least $1 - 2/n$.}
\end{proof}
\section{Algorithm \acpd}
\label{sec:acpd}


\begin{algorithm}[t]
{\footnotesize \cbr{
\caption{\sf Asymmetric-C-Partial-D}
\label{alg:acpd}



\tcp{Cancellation Phase Actions}
\If{$($$e_i$ is the first exchange in the second subphase of $u$$)$ and
    $((value_u = A \wedge saved_v = B) \vee (value_u = B \wedge saved_v = A))$} 
    {$value_u \gets \bot$; \tcp{$u$ cancels its own value unilaterally (asymmetrically)}}
\BlankLine
\tcp{Duplication Phase Actions}
\If{$($$e_i$ is the first exchange in the second subphase of $u$$)$ and $(value_u = \bot)$ and $(saved_v \in \{A,B\}) $ \linebreak}
{$value_u \gets saved_v$; \tcp{$u$ duplicates $v$'s value}} 
}
}
\end{algorithm}

\cbr{
Algorithm \acpd\ 
(Algorithm~\ref{alg:acpd}),
follows the structure of Algorithm~\ref{alg:skeleton}.
We fill-in the omitted action steps for the cancellation and duplication phases.
}

\cbr{
\noindent
{\bf\em Cancellation phase:} 
The cancellation phase repeats $\gamma = 1024$ times 
(as determined in the proof of Theorem~\ref{theorem:algo1}).
We describe the actions of one such phase (out of $\gamma$).}
Suppose that exchange $e_i = \{u,v\}$ is the first exchange for $u$ in the second cancellation subphase while $v$ also is in the cancellation phase (in any subphase). Then, $u$ compares $value_u$ with $saved_v$. If $value_u$ is non empty and opposite than $saved_v$ then $u$ cancels its value and $u$ becomes an empty node by setting $value_u \gets \bot$. Note that the exchange is not affecting the local state of node $v$, that is, the cancellation is {\em asymmetric}. Node $u$ will not attempt to make any further changes in its local state in the current phase. If node $u$ was not successful in cancelling its value (i.e. $saved_v$ is empty or not opposite) then node $u$ will not try to cancel again in this phase and will wait to attempt cancellation again in a subsequent phase \cbr{(in the sequence of $\gamma$ phases)}. Thus, node $u$ has a single chance in one exchange of the current phase to cancel its value and become empty. 

\noindent
{\bf\em Duplication phase:}
Suppose that $u$ participates in an exchange $e_i = \{u,v\}$ such that $value_u = \bot$ ($u$ is empty). If $e_i$ is the first exchange for $u$ in the second duplication subphase, then $u$ will attempt to duplicate (copy) the value of $v$
by setting $value_u \gets saved_v$. This exchange does not alter the local state of $v$. This is the only attempt of $u$ to modify its empty value in the current phase. If the copying is unsuccessful (because $saved_v = \bot$) then node $u$ will not try again at the current phase. Thus, $u$ has a single chance in one exchange of the phase to become non-empty.

\subsection{Analysis of Algorithm \acpd}

Consider the cancellation phase.
We say that the cancellation phase starts when for the first time
some node participates in an exchange in the second subphase of the phase,
and the cancellation phase ends when the last node takes step
in the second subphase of the phase.
We also use the use the same notions of phase start and phase end 
for the duplication and resolution phases as well.
Without loss of generality assume that $A$ is the majority value;
the proofs below are symmetric if $B$ was the majority.
Let $a$ and $b$ denote the number of honest nodes with values $A$ and $B$, respectively,
just before the cancellation phase starts.
Let $a'$ and $b'$ denote the number of honest nodes with values $A$ and $B$, respectively,
just after the cancellation phase ends.

\begin{lemma}
\label{lemma:cancellation}
In the cancellation phase, if $|a - b| \geq f + 4 \sqrt{n \ln n}$
\cb{and $ab \geq 3 n \ln n$}
then with probability at least $1 - \cb{4/n}$ 
it holds that the majority value stays the same at the end of cancellation phase such~that
\begin{itemize}
\item
$a' \leq a - ab/n + \sqrt{2 n \ln n}$,
\item
$b' \leq b - ab/n + \sqrt{2 n \ln n}$, and 
\item
$|a'-b'| \geq |a - b| - f - 4 \sqrt{n \ln n}$.
\end{itemize}
\end{lemma}

\begin{proof}
When a node $u$ enters the second sub-phase of the cancellation phase, 
in the first exchange $\{u,v\}$ in the sub-phase it decides whether to cancel its own value.
If $v$ had the opposite value than $u$ when $v$ entered the cancellation phase,
then $u$ will cancel its own value and it will become an empty node.
Note that $v$ may no longer hold a value at the moment of the exchange with $u$,
but this does not matter since we only care about the value that $v$ had at the 
beginning of the phase.
If $v$ does not have the specific opposite value,
then $u$ will not cancel its own value in~this~phase.

First consider the case that $u$ has the value $A$ at the beginning of the cancellation phase.
The probability that $v$ has the value $B$ is $b/n$.
Moreover, if $v$ is faulty then $v$ may pretend that it has the value $B$ causing cancellation for $u$.
This occurs with probability $f/n$.
Overall, with probability  $p$, 
where $b/n \leq p \leq (b+f)/n$, 
node $u$ will successfully cancel its value.

Let $a_c$ be the number of nodes with value $A$ that get cancelled during the phase. 
\cb{We have that the expected value is $\mu = E[a_c] = ap$. 
Let $\delta = \sqrt{3 \ln n} / \sqrt {\mu}$. 
Since $\mu = ap \geq ab / n \geq 3 \ln n$, 
we get 
$0 < \delta \leq 1$.
Hence, from Lemma \ref{lemma:chernoff}, 
since $af/n \leq f$ and $a(b + f) \leq n^2$},
we get that the number of cancellations for $A$ valued nodes 
is more than
\begin{eqnarray*}
\zeta_1
& = & \mu + \sqrt{3 \mu \ln n}\\
& \cb{\leq} &  a(b+f)/n + \sqrt{3 a(b+f) (\ln n)/n}\\
& \leq & ab/n + f + \sqrt{3 n \ln n}
\end{eqnarray*}
with probability at most $e^{-\delta^2 \mu / 3} = e^{-\ln n} = 1/n$.
\cb{Similarly,
for $\delta = \sqrt{2 \ln n} / \sqrt{\mu}$,}
we get that the number of cancellations for $A$ valued nodes~is~less~than
\begin{eqnarray*}
\zeta_2 
& = & \mu - \sqrt{2 \mu \ln n} \\
& \cb{\geq} & ab/n - \sqrt{2 a(b +f) (\ln n)/n} \\
& \geq & ab/n - \sqrt{2 n \ln n}
\end{eqnarray*}
with probability at most $e^{-\delta^2 \mu / 2} \leq e^{-\ln n} = 1/n$.
Therefore, with probability at least $1 - 2/n$,
the number of cancellations $a_c$ for nodes with value $A$ is bounded 
as $\zeta_2 \leq a_c \leq \zeta_1$~which~gives
$$
ab/n - \sqrt{2 n \ln n} \leq a_c \leq a b/n + f + \sqrt{3 n \ln n} \ .
$$
\cbr{
The number of cancellations $b_c$ for nodes with value $B$ 
is similarly bounded
with probability at least $1-2/n$
}
$$
ab/n - \sqrt{2 n \ln n} \leq b_c \leq ab/n + f + \sqrt{3 n \ln n} \ .
$$
Without loss of generality suppose that $a \geq b$.
Since $a_c = a - a'$ and $b_c = b - b'$,
we get
\begin{eqnarray*}
a'-b'
& = & (a - a_c) - (b - b_c)
 =   a - b - a_c + b_c \\
& \geq & a - b - (ab/n - f - \sqrt{3 n \ln n}) \\ 
&& + (ab/n - \sqrt{2 n \ln n})\\
& \geq & a - b - f - 4 \sqrt{n \ln n} \ . 
\end{eqnarray*}
Thus, since $a - b \geq f + 4 \sqrt{n \ln n}$,
we also get that $a' - b' \geq 0$, which implies that the majority value stays the same.
\end{proof}

Next, consider the duplication phase.
As above, let $a$ and $b$ be the number of respective honest nodes with $A$ and $B$ values 
just before the phase starts, 
and $a'$ and $b'$ after~the~phase~ends.

\begin{lemma}
\label{lemma:duplication}
\cbr{For $n \geq e^8$,} In the duplication phase, if $|a - b| \geq f + 4 \sqrt{n \ln n}$,
and the number of empty nodes is \cb{$\e \geq n/4$},
then with probability at least $1 - \cb{2/n}$ 
it holds that the majority value stays the same at the end of duplication phase such that
$$|a'-b'| \geq |a - b|(1 + \e/n) - f - 4 \sqrt{n \ln n} \ .$$
\end{lemma}

\begin{proof}
\cb{Without loss of generality suppose that $a \geq b$.}
Consider a node $u$ which is empty at the beginning of the duplication phase 
(just before the first exchange of the first sub-phase).
When an empty node $u$ enters the second sub-phase of the duplication phase, 
in the first exchange $\{u,v\}$ in the sub-phase $u$ decides whether to copy the value of $v$
which $v$ held just before the start of the phase.

The probability that $v$ has the value $A$ is $a/n$.
Moreover, if $v$ is faulty then $v$ may pretend that it has the value $A$ causing a false duplication 
of value $A$ to $u$.
This occurs with probability $f/n$.
Overall, with probability  $p$,
where \cb{$a/n \leq p \leq (a+f)/n$},
node $u$ will have its empty value~replaced~to~$A$.

Let $\e$ be the number of empty honest nodes just before the phase starts. 
Let $\e_a$ be the number of empty nodes that adopt value $A$ after the duplication phase.
\cb{
Since $a - b \geq f + 4 \sqrt{n \ln n}$, we have $a \geq 4 \sqrt{n \ln n}$.
Consider the expected value $\mu = E[\e_a] = \e p$.
Let $\delta =  \sqrt{2 \ln n} / \sqrt{\mu}$.
Since $\mu = \e p \geq \e a/n \geq n/4 \cdot 4 \sqrt{n \ln n} \cdot 1/n \geq \sqrt{n \ln n}$,
we get $0 < \delta \leq 1$ for $n \geq e^4$.
Hence, from Lemma \ref{lemma:chernoff},
since $\e (a + f) \leq n^2$,
we get
\begin{eqnarray}
\e_a 
& \geq & \mu - \sqrt{2 \mu \ln n} \nonumber
 \geq \e a/n - \sqrt{2 e(a+f) (\ln n)/n} \nonumber\\
& \geq & \e a/n - \sqrt{2 n \ln n} \label{eqn:ea_lower_bound}
\end{eqnarray}
with probability at most $e^{-\delta^2 \mu / 2} \leq e^{-\ln n} = 1/n$.

Let $\e_b$ be the number of empty nodes that adopt value $B$ after the duplication phase.
Suppose that $b \geq \sqrt{3 n \ln n}$.
Consider the expected value $\mu = E[\e_b] = \e p$,
where $b/n \leq p \leq (b+f)/n$.
Let $\delta =  \sqrt{3 \ln n} / \sqrt{\mu}$.
Since $\mu = \e p \geq \e b/n \geq n/4 \cdot \sqrt{3 n \ln n} \cdot 1/n \geq 1/4 \cdot \sqrt{3 n \ln n}$,
we get $0 < \delta \leq 1$ for $n \geq e^8$.
Hence, from Lemma \ref{lemma:chernoff},
since $\e(b+f) \leq n^2$,
we get
\begin{eqnarray}
\e_b 
& \leq & \mu + \sqrt{3 \mu \ln n} \nonumber
 \leq  \e(b+f)/n + \sqrt{3 \e(b+f) (\ln n)/n} \nonumber\\
& \leq &  \e b/n + f + \sqrt{3 n \ln n} \label{eqn:eb_upper_bound}
\end{eqnarray}
with probability at most $e^{-\delta^2 \mu / 3} = e^{-\ln n} = 1/n$.
Note that inequality \ref{eqn:eb_upper_bound} holds even if 
$b < \sqrt{3 n \ln n}$, 
since any of the $b$ nodes can duplicate their $B$ value at most once
(which gives $\e_b < \sqrt{3 n \ln n}$).}

Since $\e_a = a' - a$ and $\e_b = b' - b$,
\cb{from Equations~\ref{eqn:ea_lower_bound} and~\ref{eqn:eb_upper_bound}}
we get
\begin{eqnarray*}
a'-b'
& = & (a + \e_a) - (b + \e_b)
 =  a - b + \e_a - \e_b \\
& \geq & a - b + (\e a/n - \sqrt{2 n \ln n})\\
&& - (\e b/n + f + \sqrt{3 n \ln n}) \\
& \geq & (a - b)(1 + \e/n) - f - 4 \sqrt{n \ln n} \ . 
\end{eqnarray*}
Moreover, since $a - b \geq f + 4 \sqrt{n \ln n}$ and $\e/n \geq 0$,
we also get that $a' - b' \geq 0$, which implies that the majority value stays the same.
\end{proof}

We are now ready to prove Theorem \ref{theorem:algo1}.
\cbr{(The theorem holds for
$n \geq x \ln^2 x$ where $x = 2 \xi^4_1 \xi^2_2$, $\xi_1 = 256$, $\xi_2 = 4$.)}
%

%

%
\begin{proof}[Proof of Theorem \ref{theorem:algo1}]
Let  $\gamma = \xi_1 \xi_2 = 256 \cdot 4$ and $c' = 6(4 \gamma/3 - 1) $.
Without loss of generality, suppose that the majority value is $A$
so that $d = a - b = c' (f + 4\sqrt{n \ln n})$.

The nodes go through the cancellation phase $\gamma$ times.
let $a_i$ and $b_i$ denote the number of nodes with 
respective preferred values $A$ and $B$ at the end of the $i$th cancellation phase,
where $1 \leq i \leq \gamma$.
Suppose that initially $a > n/\xi_2$ and $b > n/\xi_1$.
Hence, $a b/n >  n /(\xi_1 \xi_2)$
\cb{and for sufficiently large $n$, $ab \geq 3n \ln n$}.
From Lemma \ref{lemma:cancellation},
after we apply the cancellation phase once,
we get that $a_1 < a - n /(\xi_1 \xi_2) + \sqrt{2 n \ln n}$,
$b_1 < b - n /(\xi_1 \xi_2) + \sqrt{2 n \ln n}$,
and $a_1 - b_1 \geq a - b - f - 4 \sqrt{n \ln n}$.

As long as $a_j > n/\xi_2$ and $b_j > n/\xi_1$, for $1 \leq j < i$,
after applying the cancellation phase $i$ times
we get that $a_i < a - i(n /(\xi_1 \xi_2) - \sqrt{2 n \ln n})$ and
$b_i < b - i(n /(\xi_1 \xi_2) - \sqrt{2 n \ln n})$.
Moreover, \cb{for $i = \gamma$},
\begin{eqnarray}
\label{eqn:gamma-diff}
a_\gamma - b_\gamma \geq a - b - \gamma(f + 4 \sqrt{n \ln n}) \ .
\end{eqnarray}

Note that since $\gamma < c'$, the precondition of Lemma \ref{lemma:cancellation}
on $|a - b|$ still holds for $i = \gamma$.
Since $\gamma = \xi_1 \xi_2 = 256 \cdot 4$, we get  $\gamma \cdot n /(\xi_1 \xi_2) = n$.
For sufficiently large $n$,
$\gamma  \sqrt{2 n \ln n}  = 256 \cdot 4 \cdot \sqrt{2 n \ln n} \leq n / 256 = n / \xi_1$. 
Hence, \cb{for $i = \gamma$}, $a - \gamma (n /(\xi_1 \xi_2) + \sqrt{2 n \ln n}) \leq a - n + n/\xi_1 \leq n/\xi_1$
and $b - \gamma (n /(\xi_1 \xi_2) + \sqrt{2 n \ln n}) \leq b - n + n / \xi_1 \leq n/\xi_1$.
Therefore, after $\gamma$ iterations it has to be that $b_\gamma \leq n/\xi_1$  or  $a_\gamma \leq n/\xi_2$ (since $\xi_2 \leq \xi_1$).

The nodes continue with the resolution phase.
By
Lemma~\ref{lemma:resolution},
if $b_{\gamma} \leq n / \xi_1 $ and $a_\gamma \geq n/\xi_2$
then all honest \cbr{(undecided)} nodes will decide the majority value $A$ in this phase.
\cbr{Otherwise, the undecided nodes continue to the duplication phase
(with the help of any decided nodes).
Note that from Lemma~\ref{lemma:resolution},
any nodes that decide choose the correct majority value in their decision.}

Therefore, suppose that $b_{\gamma} \leq a_\gamma \leq n/\xi_2 = n/4$
and that the nodes continue to the duplication phase.
Then, the number of non-empty honest nodes are at most $a_\gamma + b_\gamma \leq n/2$,
and the number of empty nodes is \cb{$\e \geq n - n/2 - f = n/2 - f \geq n/4$, since $f \leq n/c_f = n /256$}.
Let $a'$ and $b'$ be the number of honest nodes with respective values $A$ and $B$. 
\cb{From Lemma~\ref{lemma:duplication}
and Equation~\ref{eqn:gamma-diff}, 
since $c'/6 = (4 \gamma/3 - 1)$:}
\begin{eqnarray*}
a'-b' 
& \geq & (a_\gamma - b_\gamma)(1 + \e/n) - f - 4 \sqrt{n \ln n} \\
& \geq & (a_\gamma - b_\gamma)(3/2 - f/n) - f - 4 \sqrt{n \ln n}\\
& \geq & (a_\gamma - b_\gamma)(3/2 - 1/256) - f - 4 \sqrt{n \ln n}\\
& \geq & (a_\gamma - b_\gamma)4/3 - f - 4 \sqrt{n \ln n}\\
& \geq & (a - b - \gamma(f + 4 \sqrt{n \ln n})) 4/3 - f - 4 \sqrt{n \ln n}\\
& \geq & \frac{4}{3} (a-b) - (4\gamma/3 - 1) (f + 4 \sqrt{n \ln n})\\
& \geq & \frac{4}{3} (a-b) - \frac{1}{6} (a-b)\\
& = & \frac {7} {6} (a-b) \ .
\end{eqnarray*}

\cbr{
Therefore, after $i$ repetitions 
the difference 
is at least $(7/6)^i(a-b)$.
For $i = \lceil \log_{7/6} (n/4) \rceil = O(\log n)$,
the difference is at least $n/4$.
Thus, at least $n/4 = n / \xi_2$ nodes have value $a$,
which implies 
that at the end of the last cancellation phase,
at most $n/\xi_1$ nodes have value $b$.
Therefore, from Lemma \ref{lemma:resolution},
the honest nodes will reach the correct decision on the majority.
Since each repetition requires $\gamma + 2 = O(1)$ phases,
the algorithm requires $O(\log n)$ phases.
From Lemma \ref{lemma:drift},
since each phase requires $O(\log^2 n)$ parallel time,
we have in total $O(\log^3 n)$ parallel time.
}

Combining the probabilities of all the involved lemmas for each of the $O(\log n)$ phases, where each phase fails with probability at most $O(1)/n$, 
the result holds with probability at least $1 - O(\log n) /n$. 
\cbr{The algorithm requires in total $O(\log^3 n)$ states per node since 
all 
local variables have maximum absolute value of $O(\log^3 n)$.}
\end{proof}

\section{Algorithm \scfd}
\label{sec:scfd}

\begin{algorithm}[t]
{\footnotesize \cbr{
\caption{\sf Symmetric-C-Full-D}
\label{alg:scfd}




\tcp{Cancellation Phase Actions}
\If{$($either $u$ or $v$ are in second subphase$)$ and 
$((value_u = A \wedge value_v = B) \vee (value_u = B \wedge value_v = A))$} 
{$value_u \gets \bot$; \tcp{$u$ cancels its value, and $v$ also cancels its own value}}
\BlankLine
\tcp{Duplication Phase Actions}
\If{$($$u$ is in second subphase$)$ and 
$(value_u \neq \bot)$ and $(value_v = \bot)$ and $($$u$ was not empty at the beginning of the duplication phase$)$ and $($$u$ has not cloned its value yet to any node in this phase$)$\linebreak} 
{$value_v \gets value_u$; \tcp{$v$ adopts $u$'s value}}
}
}
\end{algorithm}
\cbr{
Algorithm \scfd\ (Algorithm \ref{alg:scfd})
follows the structure of skeleton Algorithm~\ref{alg:skeleton}.
We fill-in the omitted action steps for the cancellation and duplication phases.
(This algorithm is not using the $saved_u$ variables.)

\noindent
{\bf\em Cancellation phase:} 
The cancellation phase is executed just once ($\gamma = 1$)
before continuing to the resolution phase.
In the cancellation phase,
}
a non-empty node will attempt for the whole duration of the phase to cancel its value. Suppose that $u$ participates in an exchange $e_i = \{u,v\}$ where $u$ is in the second subphase while $v$ is also is in the cancellation phase (in any subphase). Then, $u$ compares $value_u$ with $value_v$. If $value_u$ is non empty and opposite than $value_v$ (the opposite of $A$ is $B$, and vice-versa) then $u$ cancels its value and $u$ becomes an empty node by setting $value_u \gets \bot$. The same also happens symmetrically to node $v$.  If node $u$ was not successful in cancelling its value (i.e. $value_v$ is empty or not opposite) then node $u$ will try again in the next exchange in the same phase and will keep trying until the end of the second subphase. 

\noindent
{\bf\em Duplication phase actions:} Suppose that $u$ participates in an exchange \cbr{$e_i = \{u,v\}$} in the second subphase such that $value_u \neq \bot$ ($u$ is non-empty) and $value_v = \bot$. If $u$ was non-empty at the beginning of the phase and has not cloned its value before in this phase, then $u$ will attempt to duplicate (copy) its value to $v$. If the copying is unsuccessful (because $value_v$ is non-empty) then node $u$ will try to duplicate again in the next exchange of the current phase and will keep trying until the end of the second subphase.


\subsection{Analysis of Algorithm \scfd}

We start with some basic results that will help us to argue about the preference tally difference in the phases.  
The first result bounds the number of honest nodes that can be affected by the faulty nodes during a parallel time step.

\begin{lemma}
\label{lemma:honest-Byzantine}
For $f \leq n/4$,
out of $n$ exchanges there are at most $3(f + \ln n)$ exchanges between faulty and honest nodes,
with probability at least $1 - e^{-f/8}/n$.
\end{lemma}

\begin{proof}
If $f = 0$,
\cb{then there are no exchanges between honest and faulty nodes,
and the result holds trivially.}
So assume that $1 \leq f \leq n/4$.

The number of pairs between faulty and honest nodes is $f (n-f)$.
\cb{Since the number of exchange pairs is ${n \choose 2} = n(n-1)/2$}, 
the probability that an exchange is between a faulty and an honest node is \cb{$p = f (n-f) /(n(n-1)/2) = 2 f (n-f) / (n^2 - n)$}.

Consider now a sequence of $n$ exchanges.
Let $X_i$ denote the random variable such that $X_i = 1$,
$1 \leq i \leq n$, 
if the $i$th exchange in the sequence is between a faulty and an honest node, 
and $X_i = 0$ otherwise.
Let $X = \sum_{i=1}^n X_i$ denote the total number 
of exchanges between faulty and honest nodes in the sequence.
We have $\mu = E[X] = n p  = \cb{2f(n-f)/(n-1)}$.
We can bound $\mu$ from above as 
$$\mu = \cb{2f(n-f)/(n-1)} \leq 2f (n-1) /(n-1) = 2f \ ,$$
and from below as 
$$\mu = \cb{2f (n-f)/(n-1)} \geq 2 f (n - n/4) / n = 3 f / 2 \ .$$
Hence,
$$\frac {3f} {2} \leq \mu \leq 2f \ .$$

\cb{Let
$\delta = 1/2 + k/\mu$, for some $k \geq 0$. 
We have
$\Pr[X \geq 3 \mu /2 + k] = \Pr[X \geq (1 + \delta) \mu]$.
If $k/\mu \leq 1/2$, then $\delta \leq 1$,
and from Lemma \ref{lemma:chernoff},
\begin{eqnarray*}
\Pr[X \geq (1 + \delta) \mu]
& \leq & e^{-\frac{\delta^2 \mu} 3}
 =  e^{-\frac{(1/2 + k / \mu)^2 \mu} 3}\\ 
 & =  & e^{-\frac{(1/4 + k^2 / \mu^2 + k/\mu) \mu} 3}
 \leq   e^{-\frac{(1/4 + k/\mu) \mu} 3}\\
& = & e^{-(\mu/12 + k/3)}
 \leq  e^{-f/8 - k/3}
\ .
\end{eqnarray*}
On the other hand,
if $k/\mu > 1/2$, then $\delta > 1$,
and from Lemma \ref{lemma:chernoff},
\begin{eqnarray*}
\Pr[X \geq (1 + \delta) \mu]
& \leq & e^{-\frac{\delta \mu} 2}
 =  e^{-\frac{(1/2 + k / \mu) \mu} 2}\\ 
& = & e^{-(\mu/4 + k/2)}
 \leq  e^{-3f/8 - k/2}\\
& \leq & e^{-f/8 - k/3}
\ .
\end{eqnarray*}
Thus,
$\Pr[X \geq (1 + \delta) \mu] \leq e^{-f/8 - k/3}$.
Therefore, 
since $\mu \leq 2 f$,
we get $\Pr[X \geq 3 f + k] \leq \Pr[X \geq 3 \mu /2 + k] \leq e^{-f/8 - k/3}$.
Setting $k = 3 \ln n$,
we get that $\Pr[X \geq 3 f + k] \leq e^{-f/8 - \ln n} = e^{-f/8}/n$, as needed.}
\end{proof}

Next, we bound the number of exchanges involved in any phase. 
For this result we consider the number of exchanges during which any node 
can be in the specific~phase. 

\begin{lemma}
\label{lemma:phase-exchanges}
Any specific phase $\phi$ involves at most $z = nD/2 + n\zeta + 1$ consecutive exchanges of the system with probability at least $1 - 4/n^2$.
\end{lemma}

\begin{proof}
Consider a phase $\phi \geq 0$.
Suppose that $u$ is the first node that enters the phase, such that exchange $e_i$
involves $u$ and causes the updated local counter $C_u$ to indicate the beginning of the phase.
Assume for simplicity that there is an exchange $e_0 = \{u,u\}$ for the case $\phi = 0$.
Hence, $\lfloor C_u / D \rfloor = \phi$.
From the proof of Lemma \ref{lemma:drift}, 
using $r = i$ and $\mu = 2 r / n = 2 i / n$,
we get $\Pr[|2i/n - C_u| \leq \zeta] \geq 1 - 2/n^2$.
Hence, with probability at least $1 - 2/n^2$ for exchange $e_i$ it holds that 
$2i/n \geq C_u - \zeta$. 

Let $v$ be the last node to exit phase $\phi$, such that exchange $e_j$
is the last involving $v$ in phase $\phi$
which causes an update to $C_v$.
Hence, right after exchange $e_j$, $\lfloor C_v / D \rfloor = \phi$.
Similarly as above, 
from the proof of Lemma \ref{lemma:drift}
with probability at least $1 - 2/n^2$ for exchange $e_j$ it holds that 
$2j /n \leq C_v + \zeta$.

Since $\lfloor C_u / D \rfloor = \lfloor C_v / D \rfloor = \phi$ ,
we get that $C_v - C_u \leq D$. 
Therefore,
with probability at least $1 - 4/n^2$, the maximum number 
of consecutive exchanges involved in phase $\phi$ is 
at most
\begin{eqnarray*}
j - i + 1 
& \leq & n(C_v + \zeta)/2 - n(C_u - \zeta)/2 + 1 \\
& \leq & n(C_v - C_u)/2 + n\zeta + 1\\
& \leq & nD/2 + n\zeta + 1 \ .
\end{eqnarray*}
\end{proof}

We now consider how many system exchanges occur 
while the local counter of a node is incremented at least~$12 \ln n$~times.

\begin{lemma}
\label{lemma:increment-exchanges}
While the local counter of a node is incremented $y \geq 12 \ln n$ times,
more than $ny/4$ system exchanges occur with probability at least $1 - 1/n^2$.
\end{lemma}

\begin{proof}
Consider a sequence of exchanges $e_{i_1}, e_{i_2}, \ldots, e_{i_k}$ in which the local counter of a node $u$ 
is incremented at least $y$ times.
Suppose for the sake of contradiction that $k \leq n y / 4$.
If necessary, extend the sequence of $k$ exchanges to contain exactly $k' = ny/4$ exchanges.  
An exchange involves node $u$ with probability $(n-1)/{{n} \choose {2}} = 2/n$.
The expected number of exchanges involving $u$ is $\mu = k' \cdot 2/n = ny/4 \cdot 2/n = y/2$.
Let $X$ be the actual number of exchanges involving node $u$ during the $k'$ exchanges.
Since $2\mu = y$, using Lemma \ref{lemma:chernoff} for $\delta = 1$,
$Pr[X \geq y] = Pr[X \geq 2 \mu ] = Pr[X \geq (1 + \delta)\mu] \leq e^{-\mu/3} = e^{-y/6} \leq  e^{-2 \ln n} = n^{-2}.$
Thus, with probability at least $1-1/n^2$ the local counter of $u$ was increment less than $y$ times.
However, since during the $k$ exchanges the local counter of $u$ was incremented at least $y$ times,
it must be that  $k > ny/4$ with probability at least $1 - 1/n^2$, as needed.
\end{proof}

In the lemma below we assume that the respective tally for values $A$ and $B$ 
before the phase is $a$ and $b$,
while the tally after the phase is $a'$ and $b'$.
Without loss of generality assume that the majority is value $A$.
The parameter $z$ is as specified in the statement of Lemma \ref{lemma:phase-exchanges}.

\begin{lemma}
\label{lemma:cancellation2}
In the cancellation phase, if $a - b \geq 3z(f + \ln n)/n$, and $D \geq 8 \cdot 12 \ln n$,
then with probability at least $1 - O(\log^2 n)/n$ it holds that
either 
\begin{itemize}
\item
$a' \geq n/8$ and $b' = 0$, or 
\item
$a' < n/8$ and $a' - b' \geq a - b - 3z(f + \ln n)/n \geq 0.$
\end{itemize}
\end{lemma}

\begin{proof}
The second subphase of a node involves $D/3$ increments of its local counter.
From Lemma \ref{lemma:increment-exchanges},
the second subphase involves more than $n D/12 \geq 8 n \ln n$ exchanges,
with probability $q \geq 1 - 1/n^2$.

First, suppose that $a' \geq n/8$.
A node $u$ with value $B$ will cancel its value if it interacts with one of the nodes 
with value $A$, which are at least $n/8$ throughout the phase, or if it interacts with a faulty node.
The probability that an exchange $e$ of the cancellation phase is such that $e = \{u,v\}$ 
where $v$ has value $A$, resulting to the cancellation of the $B$ value of $u$,
is at least $(n/8) / {n \choose 2} = 1 / (4 (n-1)) > 1 / (4 n)$.
The probability that node $u$ does not participate in any cancellation exchange 
during the at least $8 n \ln n$ exchanges (given with probability $q \geq 1 - 1/n^2$) 
of its second subphase
is at most $(1 -  1/(4n))^{8n\ln n} \leq e^{-2 \ln n} \leq n^{-2}$.
Thus, the probability that node $u$ is unsuccessful in the cancellation is at most $(1-q) + n^{-2} = 2 n^{-2}$.
Hence, the probability that any of the $b$ nodes with value $B$ is unsuccessful 
is at most $2 b n^{-2} \leq 2 n n^{-2} = 2 n^{-1}$.
Therefore, with probability at least $1 - 2 / n$,
all the $b$ nodes have their values successfully cancelled,
and $b' = 0$.

Now, suppose that $a' < n/8$.
From Lemma \ref{lemma:honest-Byzantine},
the probability that in a parallel time step (involving $n$ exchanges)
there are more than $3(f + \ln n)$ exchanges between honest and faulty nodes is at most $e^{-f/8}/n$.
From Lemma \ref{lemma:phase-exchanges}, a phase involves $z = O(n \log^2 n)$ exchanges
(considering the participation of any of the $n$ nodes in the phase).
Thus, in the $z/n = O(\log n)$ parallel time steps the probability that in any of the $z/n$ time steps
there are more than $3(f + \ln n)$ 
exchanges between honest and faulty nodes is at most 
$z/n \cdot e^{-f/8}/n = O(\log^2 n)/ n$.
Hence,
with probability at least $1 - O(\log^2 n) / n$,
during the $z/n$ parallel time steps the number of exchanges between honest and faulty nodes
are at most $3z(f + \ln n)/n$.
This implies that the number of cancellations between the faulty nodes and the 
nodes that have value $A$ are at most $3z(f + \ln n)/n$.
Such cancellations result to decreasing the difference between
the number of nodes with different values.
Hence, $a' - b' \geq a - b - 3z(f + \ln n)/n \geq 0$,
with probability at least $1 - O(\log^2 n) / n$.

\cbr{The result follows by combining the two cases described above for $a' \geq n/8$ and $a < n/8$
and their 
probabilities.}
\end{proof}

Now we consider the duplication phase.
Let $a$ and $b$ are the number of respective honest nodes with $A$ and $B$ values 
just before the phase starts, 
and $a'$ and $b'$ after the phase ends.
Again, without loss of generality assume that $A$ is the majority value.

\begin{lemma}
\label{lemma:duplication2}
In the duplication phase, if $3z(f + \ln n)/n + f \leq n/4$ and
$a \leq n/8$, then with probability at least $1 - O(\log^2 n)/n$ it holds that
$a' \geq n/8$ and $a' - b' \geq 2(a - b) - 3z(f + \ln n)/n$.
\end{lemma}

\begin{proof}
Similar to the analysis in Lemma \ref{lemma:cancellation2},
the number of faulty nodes that interact with honest ones during the cancellation phase
is at most $3z(f + \ln n)/n$ with probability at least $1 - O(\log^2 n) / n$.
During these interactions empty honest nodes may adopt a value.
Therefore, up to $3z(f + \ln n)/n + f \leq n/4$ (where the additional $f$ are the original faulty nodes), that were not counted in $a$ or $b$ become non empty due to faulty nodes.
Thus, the number of empty honest nodes which are not affected from the faulty nodes are
at least $n - 2 \cdot n/8 - n/4 = n/2$.
If all honest nodes that are counted at the beginning of the phase in $a$ and $b$ are duplicated in this phase,
then that leaves at least $n/2 - n/4 = n/4$ empty nodes.

For any non-empty node $u$ counted in $a$ or $b$,
the probability that an exchange $e$ of the duplication phase is such that $e = \{u,v\}$ 
where $v$ is empty, resulting to the duplication of the value of $v$,
is at least $(n/4) / {n \choose 2}  = 1/(2(n-1)) > 1 / (2 n)$.
As stated in the proof of Lemma \ref{lemma:cancellation2},
the second subphase involves at least $8n \ln n$ exchanges, with probability $q \geq 1 - 1/n^2$.
The probability that node $u$ does not participate in any duplication exchange 
during these $8 n \ln n$ exchanges of the second subphase 
is at most $(1 - 1 /(2n))^{8n\ln n} \leq e^{-4 \ln n} \leq n^{-4}$.
Thus, the probability that node $u$ is unsuccessful in the duplication is at most  $(1-q) + n^{-4} \leq 2 n^{-2}$.
Hence, the probability that any of the $a$ or $b$ nodes is unsuccessful to duplicate
is at most $(a+b) n^{-2} \leq 2 n n^{-2} = n^{-1}$.
Therefore, with probability at least $1 - n^{-1}$,
all the $a$ and $b$ nodes have their values successfully duplicated.

If we do not consider the impact of the faulty nodes,
then  $a' - b' \geq 2(a - b)$.
However, the faulty nodes may have converted up to $3z(f + \ln n)/n$ honest empty nodes to non-empty (with probability at least $1 - O(\log^2 n) / n$),
which decreases the difference between the honest nodes with values to
$a' - b' \geq 2(a - b) - 3z(f + \ln n)/n$.
\end{proof}


We are now ready to give the proof of Theorem~\ref{theorem:algo2}.
\cbr{(The Theorem holds for $n \geq e^2$, a consequence of Lemma \ref{lemma:drift}.)} 

%
\begin{proof}[Proof of Theorem \ref{theorem:algo2}]
First consider the case that $d = |a - b| = \Omega(1 + f \log^2 n)$.
Let $\delta = 3z(f + \ln n)/n$.
Without loss of generality assume that $A$ is the majority value and 
$a - b \geq 6 \delta$.
First the nodes execute the cancellation phase. Let $a'$ and $b'$ be the resulting values at the end of the cancellation~phase.

\cbr{Next is the resolution phase.
We use Lemma~\ref{lemma:resolution},}
but with the related constants adjusted so that $\xi_2 = 8$ and $\xi_1 = 512$.
The other constants are: $c_f = 512$, $\xi = 64, c_{\sigma_1} = 12, c_{\sigma_2} = 8 c_{\sigma_1}, c_\psi = 128 c_{\sigma_1}$.
This allows the honest nodes to decide on the majority value with the appropriate sampling if the tally for the majority is at least $n/8$,
while the tally for the minority is at most $n/512$.

\cbr{From the cancellation phase \cbr{(Lemma \ref{lemma:cancellation2})}, there are two possibilities when reaching the resolution phase:} 
\begin{itemize}
\item
$a' \geq n/8$ and $b'=0$:
by
Lemma \ref{lemma:resolution} all honest nodes decide 
the majority value $A$ with probability at least~$1 - 4/n$.
\item
$a' < n/8$: in this case, $a' - b' \geq a - b - \delta \geq 0$,
which implies that $A$ remains the majority. 
\cbr{In this case, some of the nodes may not decide.
Nevertheless, by Lemma~\ref{lemma:resolution},
the nodes that are able to decide make the correct choice for the majority value.}
\end{itemize}

\cbr{The undecided nodes (with the help of decided nodes) 
continue to the duplication phase.}
Let $a''$ and $b''$ be the tallies for the two values $A$ and $B$
at the end of the duplication phase.
\cbr{Since $a' < n/8$,
from Lemma \ref{lemma:duplication2} we get}
\begin{eqnarray*}
a'' - b'' & \geq & 2(a' - b') - \delta
\ \ \geq \ \ 
2(a-b - \delta) - \delta
\\ 
&=& 
2(a-b) -3\delta
\ \ \geq \ \ 
2 (a - b) - \frac{a - b} {2}\\
& = & \frac {3}{2} (a - b) \ .
\end{eqnarray*}
\cbr{Therefore, after $i$ repetitions of the above process, 
the difference 
becomes $(3/2)^i(a-b)$.
From Lemma \ref{lemma:resolution},
for $i = \lceil \log_{3/2}(n/8) \rceil = O(\log n)$,
the honest nodes make the correct decision on the majority,
since $(3/2)^i(a-b)\geq n/8 = n/\xi_2$.
Since each repetition requires three phases (recall $\gamma = 1$),
the whole algorithm requires $O(\log n)$ phases.
From Lemma \ref{lemma:drift},
each phase requires $O(\log^2 n)$ parallel time,
which gives in total $O(\log^3 n)$ parallel time.
}

Combining the probabilities of all the involved lemmas for each of the $O(\log n)$ phases, 
the result holds with probability at least $1 - O(\log^3 n) /n$.
The algorithm requires in total $O(\log^3 n)$ states per node since all the constant number of local variables have a maximum absolute value of $O(\log^3 n)$. 

Now consider the case $d = |a-b| = \Omega(n)$.
The analysis above covers all cases except  $6\delta > n$.
This case can be covered by considering $a-b = n$.
We use a small variant of the algorithm where the first phase to execute is the resolution phase.
After that, the phases execute as normal (i.e. cancellation, resolution, duplication, and then repeat).
Since $a-b = n$, 
the number of honest nodes that have the majority value are at least $n - n/c_f > n/\xi_2$,
while the number of nodes that have the minority value are at most the faulty nodes $n/c_f \leq n /\xi_1$.  
From Lemma \ref{lemma:resolution}, a correct majority decision is reached in the first phase.
Hence, the result follows considering all cases for $d$.
\end{proof}
\section{Combining the two algorithms}
\label{sec:combined}

Suppose that $f$ is not known, therefore nodes cannot a priori decide which algorithm to choose.
In order to make use of the best of the two algorithms,
one may want to run them together and choose the one that guarantees correct majority.
This however cannot be done in a simple way --
if the two algorithms return different values at a node,
it may not be able to decide which one is correct (recall that it does not know $f$).
Therefore, we combine the two algorithms, \acpd\ \cb{(Algorithm \ref{alg:acpd})} and \scfd\ \cb{(Algorithm \ref{alg:scfd})},
into one \combined, as~follows.

In the beginning, \combined\ checks if the bias between the two values could be (likely) larger than $cn/\log^2 n$, for some suitable constant $c>0$. More precisely, in the first $\log^3 n$ interactions, it sends its initial value and receives a value from the other peer (which might be Byzantine). It counts the number of values $A$ and $B$, and if the difference is at least $c\log n$, it sets variable $Z_0$ to $1$, otherwise to $0$.

Next, the algorithm lets the two algorithms, \acpd\ and \scfd, run in parallel three times in every node.
Let $c>0$ be a sufficient constant, which value will arrive from the analysis (proof of Theorem~\ref{theorem:combined}).
The first run assumes initial values at every node.
Before starting the second run, 
the algorithm creates as expected bias of $(1/c)\sqrt{n\log n}$ initial values for $A$;
more precisely, it converts $(1/c)\sqrt{n\log n}$ nodes, on expectation, with initial value $B$ to have value $A$. This is done by tossing a random coin
by each node with initial value $B$,
with probability of conversion $c\sqrt{\frac{\log n}{n}}$.
Before starting the third run, 
the algorithm creates an expected bias of $(1/c)\sqrt{n\log n}$ initial values for $B$;
more precisely, it converts $(1/c)\sqrt{n\log n}$ nodes, on expectation, with initial value $A$ to have value~$B$.
The conversion is done similarly as before the second run:
by tossing a random coin by each node with initial value $A$,
with probability of conversion~$c\sqrt{\frac{\log n}{n}}$.

After finishing all three runs of both algorithms, a node checks its own decisions for every run of the two algorithms.
Let $(X_i,Y_i)$ be the decision made at the end of run $i$, for $i=1,2,3$, where $X_i$ is the decision of algorithm \acpd\ 
and $Y_i$ is the decision of algorithm \scfd.
Then the node does the following:
\begin{itemize}
\item if $Z_0=1$ then answer $X_1$;
    \item 
if $Z_0=0$ and $X_2 = X_3$ then the final decision is $X_1$ (we show later that in fact $X_1 = X_2 = X_3$ with high probability);
\item
if $Z_0=0$ and $X_2 \neq X_3$, then the final decision is $Y_1$.
\end{itemize}

\noindent
{\bf Random common coin implementation:}
One could implement the random coin needed for algorithm
\combined\ using local random bits: each node 
keeps tossing locally its symmetric coin, independently, until its counter
reaches $\log\sqrt{(1/c)^2 n\log n}$ or $0$ is drawn.
In the former case, it outputs $1$ (corresponding to ``conversion''), in the latter -- it outputs $0$ (corresponding to ``no conversion'').
Note that the output is $1$ with probability $(1/2)^{\log\sqrt{(1/c)^2 n\log n}}=c\sqrt{\frac{\log n}{n}}$ and $0$ otherwise.
Also note that it is implementable in the model, as
it only requires storing a counter ranging from $1$ to $\log\sqrt{(1/c)^2 n\log n}$, which could be done in $O(\log\log n)$ local memory.


%
\begin{proof}[Proof of Theorem \ref{theorem:combined}]
Let $n$ be sufficiently large and $c>0$ be a sufficient 
constant to satisfy asymptotic notations. 
First observe that if $d = |a-b|> 2c n/\log^2 n$ 
then $Z_0=1$, 
and if $|a-b|< (c/2) n/\log^2 n$ 
then $Z_0=0$, both  with high probability, by Chernoff bounds, c.f., Lemma~\ref{lemma:chernoff}.
%

Due to Lemma~\ref{lemma:drift} and the fact that we run only
three executions in a row, we keep the required synchronization of phases with high probability. Therefore we can apply the analysis' of algorithms \acpd\ and \scfd\ done in Sections~\ref{sec:acpd} and~\ref{sec:scfd}, respectively.

The performance, number of states and probability follow from Theorems~\ref{theorem:algo1} and~\ref{theorem:algo2}, respectively for algorithms \acpd\ and \scfd, and from the implementation of random common coin;
we apply union bound~to~
all those
probabilities.

It remains to prove correctness if the lower bound formula for $|a - b|$ holds.
W.l.o.g. we may assume that $a>b$, as the algorithm is oblivious with respect to importance of any of the values over the other.
We know, from assumptions in the Theorem, that $|a - b|= \Omega(\min\{f + \sqrt{n \log n}, f \log^2 n + 1, n\})$, which implies that $|a - b|> (c/2) \sqrt{n \log n}$ if constant $c$ is chosen appropriately.

Assume first that $|a - b|>5c\sqrt{n \log n}$ holds.
We have two sub-cases. 
If $|a-b|> 2c n/\log^2 n$ then $Z_0=1$ and $X_1$ will be returned, which is correct with sufficient probability due to Theorem~\ref{theorem:algo1}.
Otherwise, for $|a-b|> 2c n/\log^2 n$, we may have $Z_0=1$ or $Z_0=0$, but as we show the other properties will assure correct answer.
In the second run this difference cannot decrease, and thus the result of algorithm \acpd\ will be the same in the first two runs, i.e., $X_1=X_2$ with the desired probability.
Before the third run, at most $2n\cdot c\sqrt{\frac{\log n}{n}}=2c\sqrt{n\log n}$ nodes with initial value $A$ may convert to starting value $B$, with high probability (c.f., Chernoff bound in Lemma~\ref{lemma:chernoff}), which decreases the difference $a-b$ by at most $4c\sqrt{n\log n}$, which is still bigger than $5c\sqrt{n \log n}-4c\sqrt{n\log n}= c\sqrt{n\log n}$.
Therefore, algorithm \acpd\ will produce the same result as in the first two runs, i.e., $X_1=X_2=X_3$, with the desired probability. In this case, algorithm \combined\ returns value $X_1$, regardless of $Z_0=1$ or $Z_0=0$, which is correct as the required lower bound on $|a-b|$
in Theorem~\ref{theorem:algo1}, corresponding to algorithm \acpd, holds (hence, the first run of \acpd\ was correct with desired probability).

Next assume that $|a - b|< c\sqrt{n \log n}$.
From the beginning analysis we have that $Z_0=0$ with high probability.
Before the second run, the difference $|a - b|$ cannot decrease, and thus the result of algorithm \acpd\ will be the same in the first two runs, i.e., $X_1=X_2$ with the desired probability, c.f., Theorem~\ref{theorem:algo1}.
Before the third run, at least $(1/2) n\cdot c\sqrt{\frac{\log n}{n}}=(c/2)\sqrt{n\log n}$ nodes with initial value $A$ may convert to starting value $B$, with high probability (c.f., Chernoff bound in Lemma~\ref{lemma:chernoff}), which decreases the difference $a-b$ by at least $c\sqrt{n\log n}$, which means that now we have $b-a > (c/2)\sqrt{n \log n}$.
Therefore, algorithm \acpd\ will produce correct but different majority value in the third run comparing with the previous two runs i.e., $X_1=X_2\ne X_3$, with the desired probability. 
In this case, algorithm \combined\ returns the correct value $Y_1$, as the required lower bound on $|a-b|$ in the corresponding Theorem~\ref{theorem:algo2} holds (hence, the first run of \scfd\ was correct with desired probability).

It remains to consider case $c\sqrt{n \log n} \le a - b\le 5c\sqrt{n \log n}$.
It is easy to see that the assumptions of both Theorems~\ref{theorem:algo1} and~\ref{theorem:algo2} hold, hence no matter if \combined\ outputs $X_1$ or $Y_1$, it will output the correct majority $A$ with sufficient probability.
\end{proof}

It is worth noting that the considered implementation of algorithm \combined\ uses only $O(\log\log n)$ of random bits per node, and works against Fully (dynamic) Byzantine adversary. First, because the adversary cannot prevent the right bias, with high probability, as it is done by local random choices. Second, knowing the bias does not help the adversary either, as the runs are deterministic procedures, and thus the Theorems~\ref{theorem:algo1} and~\ref{theorem:algo2} apply -- 
both
against
Fully (dynamic)~Byzantine~adversary.

\remove{
\paragraph{Creating a bias without local randomness.}
Following the idea of a Synthetic Coin~\cite{Alistarh},
we may try to extract some randomness and create a bias necessary for algorithm \combined\ without using local randomization. In our model, however, the adversary may create bias in such coin. The probabilistic space of honest nodes is not affected, as the coin is produced at each node by flipping the coin after each interaction -- the adversary cannot intervene to this process. However using the coin is affected, as nodes may connect to a Byzantine node that creates bias.
We overcome this problem in the following way. We produce 
}
\section{Related Work}
\label{sec:related}

Most of the related works for population protocols is for failure-free scenarios. We give an overview of these works. Several population protocols in the literature consider the {\em exact-majority voting} version of the 
majority consensus problem \cite{DBLP:conf/podc/AlistarhGV15,AngluinADFP06,AngluinAE2008fast,berenbrink2018population,10.1145/3087801.3087858}, which is the case where the tally difference is $d=1$.
Our algorithms also solve the exact-majority voting problem when $f=0$, that is, when all nodes are honest.
Trade-offs between the required number of parallel time steps 
and the number of states have been studied in \cite{AlistarhAEGR17,DBLP:conf/soda/AlistarhAG18}.
These works explain the lower bound of $\Omega(\log n)$ on the stabilization time for solving the majority consensus \cite{AlistarhAEGR17}, and also the $\Omega(\log n)$ lower bound requirement on the number of states
for algorithms where the discrepancy between the preferred values tally can be small
and the expected time complexity is $O(n^{1-\epsilon})$ \cite{DBLP:conf/soda/AlistarhAG18}.

There are population protocols for the majority consensus problem with a constant number of states \cite{5461999,DBLP:journals/dc/MertziosNRS17}.
However, these protocols require at least linear time to be solved due to the aforementioned lower bound related to the number of states.
Declaring a special node as a leader can significantly improve the convergence time,
as demonstrated in \cite{AngluinAE2008fast} which presents a majority consensus protocol with constant number of states
and $O(\log^2 n)$ time.
By relaxing the notion of time needed to reach a correct state,
recent works study trade-offs that improve on the time complexity while keeping the number of states small (e.g. sub-logarithmic) with appropriate parameterization \cite{Time-space-trade-offs,kosowski2018population}. Population protocols have also been used to solve other coordination problems such as leader election \cite{10.5555/3174304.3175286,10.1145/3323165.3323178}.

Guerraoui and Ruppert~\cite{DBLP:conf/icalp/GuerraouiR09}
consider a variant of the population protocol model, called community protocol, where nodes have IDs and nodes can store a limited number of other nodes’ IDs. In our paper we do not assume that nodes have IDs, nor enough storage to maintain e.g., random IDs (we only require $O(\log \log n)$ bits per node, for $n$ nodes). The work in~\cite{DBLP:conf/icalp/GuerraouiR09} considers the general class of decision problems in class NSPACE$(n \log n)$. In our paper, we focus on the majority consensus problem, which is one of the most popular problems studied in the population protocol literature.
We consider a random scheduler which is the most common assumption in population protocols’ literature,
while the work in~\cite{DBLP:conf/icalp/GuerraouiR09}
considers a general setting of exchanges with a fairness condition that guarantees that nodes are not disconnected.
Moreover, \cite{DBLP:conf/icalp/GuerraouiR09} presents results for $f=O(1)$ number of failures, 
while we present results for $f=O(n)$ failures.
Summarizing, the results in \cite{DBLP:conf/icalp/GuerraouiR09} are very interesting and challenging, but our work approaches the majority consensus problem from a different angle, obtaining significantly bigger tolerance to the number of node failures without using IDs, but for a random scheduler.

The Byzantine Agreement problem, in which honest nodes need to make a decision on one of the input values (but not necessarily a majority), was introduced by Pease, Shostak and Lamport~\cite{DBLP:journals/jacm/PeaseSL80}.
They showed~\cite{DBLP:journals/toplas/LamportSP82,DBLP:journals/jacm/PeaseSL80} that even in  a synchronous message-passing system with Byzantine failures the number of failures needs to be less than $n/3$ for a deterministic solution to exist. 
The $\Omega(fn)$ lower bound on the message complexity of deterministic solutions was proved by Dolev and Reischuk~\cite{DBLP:journals/jacm/DolevR85}
and Hadzilacos and Halpern~\cite{DBLP:journals/mst/HadzilacosH93}, which corresponds to the number of interactions in the population model. 
Similar impact is caused by asynchrony itself, even if there are no Byzantine failures -- linear time or quadratic communication is necessary even for a simpler gossip problem, as shown by Georgiou et al.~\cite{GeorgiouGGK13}.
This quadratic performance occurs because in deterministic message-passing solutions the communication is also deterministic, thus more controlled by the adversary (comparing to the population model, in which communication is scheduled by an independent random scheduler). The best qubic upper bound was given by Kowalski and Moustefoui~\cite{KowalskiM13}.

Another difference between classic message-passing systems and population model, well visible in the context of computing majority -- the protocols in the former model often require huge memory and number of communication bits, c.f., a popular deterministic protocol in~\cite{DBLP:conf/podc/Bar-NoyDDS87} or a randomized one in~\cite{DBLP:conf/podc/KingS10}
The impact of randomization was also considered in classic models, although most of the solutions assumed additional restrictions, such as achieving only an 
almost-everywhere agreement~\cite{DBLP:conf/focs/KingSSV06} or restrictions on adversarial power such as
oblivious adversary c.f.,~\cite{DBLP:conf/podc/Aspnes12,DBLP:conf/stoc/CanettiR93}.

Computing majority in population model is not directly comparable with classic Byzantine Agreement for a number of reasons. First, computing majority is more demanding than agreement on {\em any} input value.
Second, the presence of independent random scheduler may help population protocols in an uneven battle against various adversaries.
Third, one of the emphases of population protocols is on their simplicity and efficiency, in particular, short time and local memory, which in some classic settings are not feasible~to~achieve.
\cbr{As a last remark, we note that the Byzantine node behavior of our model resembles the Byzantine model studied in the cryptography/security literature \cite{Canetti00,Canetti01}.}

\remove{
Scalable randomized solutions for almost-everywhere Byzantine consensus and leader election were given by King, Saia, Sanwalani and Vee~\cite{KSSV-FOCS06}.
They showed that there is a network of poly-logarithmically bounded degrees such that if less than $n/3$ nodes are Byzantine, then with constant probability the nodes can agree on a common value in poly-logarithmic time and with $O(n/\log n)$ loss, in such a way that each non-faulty node sends and processes poly-logarithmically many bits in messages.
That solution of almost-everywhere Byzantine consensus is built on a randomized solution to leader election given in~\cite{KSSV-SODA06}.
Holtby, Kapron and King~\cite{HKK} gave a tradeoff between the time required to solve an almost everywhere Byzantine consensus by a randomized scalable solution, in which a  nonfaulty process can send and receive $\log n$ messages at a round, and the magnitude of loss; the tradeoff is varying with the number~$t$ of faults.
In particular, they showed that if $t$ is a constant fraction of~$n$ then $\Omega(n^{1/3})$ rounds are required to have no loss by a \emph{scalable} solution.
This time complexity is in contrast with the fact that  Byzantine consensus with no loss can be reached in \emph{constant} expected time for $t<n/3$, as was shown by Feldman and Micali~\cite{FM}.

Bar-Noy A., Dolev D., Dwork C., and Strong H.R.,
Shifting Gears: Changing Algorithms on the Fly To Expedite Byzantine Agreement.
{\it Proc. 6th ACM Symposium on Principles of Distributed Computing (PODC 1987)}, pp.\,42-51 1987.

King V. and Saia J., Breaking the $O(n^2)$ Bit Barrier: Scalable Byzantine Agreement 
with an Adaptive Adversary. 
{\it Proc. 29th  ACM Symposium on Principles of Distributed Computing (PODC 2010)},
pp. 420-429, 2010.

	R. Canetti, and T. Rabin,
	Fast asynchronous Byzantine agreement with optimal resilience,
	in \emph{Proceedings of the $25$th ACM Symposium on Theory of Computing (STOC)},
	1993, pp. 42 - 51.

	P. Feldman, and S. Micali,
	An optimal probabilistic protocol for synchronous Byzantine agreement, 
	\emph{SIAM Journal on Computing}, 26 (1997) 873 - 933.
	
	D. Dolev, and R. Reischuk,
	Bounds on information exchange for Byzantine agreement,
	\emph{Journal of the ACM}, 32 (1985) 191 - 204.
	
	D. Holtby, B.M. Kapron, and V. King,
	Lower bound for scalable Byzantine agreement,
	in \emph{Proceedings of the $25$th ACM Symposium on Principles of Distributed Computing (PODC)}, 2006, pp. 285 - 291.

	V. King, J. Saia, V. Sanwalani, and E. Vee, 
	Scalable leader election, 
	in \emph{Proceedings of the $17$th ACM-SIAM Symposium on Discrete algorithms (SODA)}, 2006, pp. 990 - 999.
	
	V. King, J. Saia, V. Sanwalani, and E. Vee,
	Towards secure and scalable computation in peer-to-peer networks,
	in \emph{Proceedings of the $47$th IEEE Symposium on Foundations of Computer Science (FOCS)}, 2006, pp. 87 - 98.
	
	L. Lamport, R. Shostak, and M. Pease,
	The Byzantine generals problem,
	\emph{ACM Transactions on Programming Languages and Systems}, 4 (1982) 382 - 401.
	
	M. Pease, R. Shostak, and L. Lamport,
	Reaching agreement in the presence of faults,
	\emph{Journal of the ACM}, 27 (1980) 228 - 234.

Aspnes J., Faster randomized consensus with an oblivious adversary. 
{\it Proc. 31st ACM Symposium on Principles of Distributed Computing (PODC 2012)}, pp. 1-8, 2012.
}
\section{Conclusions}
\label{sec:conclusion}

We presented Byzantine-resilient population protocols for the majority 
consensus
problem, 
which can tolerate up to $f = O(n)$ faulty nodes.
The only other known previous work tolerates up to $o(\sqrt n)$ faulty nodes.
Thus, our protocols significantly improve on the number of faulty nodes that can tolerated by 
a $\omega(\sqrt n)$ factor. We introduce a comprehensive range of Byzantine adversary models, which combine variants where the nodes are corrupted statically or dynamically, and the adversary is Full or Weak depending on its capability to observe fully or partially the execution.
For these adversary models we presented lower bounds on the size of the initial differences $d = \Omega(f+1)$.
We also presented three algorithms with different trade-offs between the requirements on $d$,
getting very close to the lower bounds.
The algorithms require $O(\log^3 n)$ states per node, they are fast (polylogarithmically) and correct against the strongest considered Fully Dynamic Byzantine Adversary (unlike the previous algorithm that worked against the Static Byzantine~Adversary).

Compared to the state of the art population protocols that are non-Byzantine-resilient,
our protocols have close asymptotic performance, which is only a poly-log factor away from 
the best known stabilization time, while the number of states is the same ($O(\log^3 n)$).
A question is whether the bounds we presented can be even tighter in the presence of faults. 
Therefore, an interesting direction is to find whether the protocol requirements on $d$ can be tighter,
and if the time and number of states can also asymptotically improve,
to get even closer results to the protocols that do not tolerate faults.
Another interesting open question is whether our techniques,
such as asymmetric/symmetric cancellations, partial/full duplication or combining different protocols by applying random biases, can be used for other related problems
with faulty nodes,
such as multi-valued majority consensus, or even leader election.
Finally, it is interesting to check whether population protocols could improve performance of Byzantine-resilient computation on a {\em multi-hop} communication systems, comparing to the classic message-passing model (c.f.,~\cite{ChlebusKO20}).

\vspace*{1ex}
\noindent
{\bf Acknowledgments:}
This work is supported by grant NSF-CNS-2131538.



\bibliographystyle{abbrv}






 
\end{document}